\renewcommand{\section}{\@startsection
  {section}%
  {1}%
  {0mm}%
  {-1\baselineskip}%
  {0.5\baselineskip}%
  {\normalfont\large\bfseries}%
}
\renewcommand{\subsection}{\@startsection
  {subsection}%
  {2}%
  {0mm}%
  {-1\baselineskip}%
  {0.5\baselineskip}%
  {\normalfont\large\itshape}%
}
\renewcommand{\subsubsection}{\@startsection
  {subsubsection}%
  {3}%
  {0mm}%
  {-1\baselineskip}%
  {0.5\baselineskip}%
  {\normalfont\itshape}%
}
\newsavebox{\tempbox}
\renewcommand{\@makecaption}[2]{
  \vspace{10pt}
  \sbox{\tempbox}{\textbf{#1.} #2}
  \ifthenelse{\lengthtest{\wd\tempbox > \linewidth}}{
    \textbf{#1.} #2\par
  }{
    \begin{center}
      \textbf{#1.} #2
    \end{center}
  }
}
\numberwithin{equation}{section}
\numberwithin{figure}{section}
\newtheoremstyle{mythm}
  {}
  {}
  {\itshape}
  {}
  {\bfseries}
  {.}
  {.5em}
  {\thmname{#1}~\thmnumber{#2}\ifthenelse{\equal{\thmnote{#3}}{}}{}{~(\thmnote{#3})}}
\newtheoremstyle{mydefn}
  {}
  {}
  {\upshape}
  {}
  {\bfseries}
  {.}
  {.5em}
  {\thmname{#1}~\thmnumber{#2}\ifthenelse{\equal{\thmnote{#3}}{}}{}{~(\thmnote{#3})}}
\newtheoremstyle{myremark}
  {}
  {}
  {\upshape}
  {}
  {\itshape}
  {.}
  {.5em}
  {\thmname{#1}~\thmnumber{#2}\ifthenelse{\equal{\thmnote{#3}}{}}{}{~(\thmnote{#3})}}
\theoremstyle{mythm}
\newtheorem{theo}{Theorem}[section]
\newtheorem{lem}[theo]{Lemma}
\newtheorem{cor}[theo]{Corollary}
\theoremstyle{mydefn}
\newtheorem{exa}[theo]{Example}
\theoremstyle{myremark}
\theoremstyle{mythm}
\newcounter{claimcounter}
\newcommand{\case}[1]{\par\medskip\noindent\textit{Case #1: }}
\newenvironment{cs}{
  \begin{description}
    \renewcommand{\case}[1]{\item[\itshape\mdseries Case ##1:]}
  }{
  \end{description}
}
\newlist{caselist}{description}{10}
\setlist[caselist]{font=\itshape\mdseries}
\newlist{eroman}{enumerate}{2}
\setlist[eroman,1]{label=(\roman*)}
\setlist[eroman,2]{label=(\alph*)}
\newlist{ealph}{enumerate}{1}
\setlist[ealph]{label=(\Alph*)}
\newcounter{nlistcounter}
\renewcommand{\mathbf}[1]{\bm{#1}}
\renewcommand{\phi}{\varphi}
\renewcommand{\epsilon}{\varepsilon}
\newcommand{\RR}{\mathbb R}
\renewcommand{\max}{\operatorname{max}}
\newcommand{\CB}{\mathcal B}
\newcommand{\CC}{\mathcal C}
\newcommand{\CP}{\mathcal P}
\newcommand{\CQ}{\mathcal Q}
\newcommand{\CR}{\mathcal R}
\newcommand{\CS}{\mathcal S}
\renewcommand{\tilde}[1]{\widetilde{#1}}
\newcommand{\Aut}{\operatorname{Aut}}
\newcommand{\ORR}{\overline{\RR}}
\newcommand{\tr}{{\textsf{\upshape t}}}
\newcommand{\str}{{\textsf{\upshape s}}}
\begin{document}

\title{Dimension Reduction via Colour Refinement}
\author{
  Martin Grohe\thanks{Martin Grohe and Erkal Selman were supported by
    the German Research Foundation DFG Koselleck grant GR 1492/14-1.} \\
 \normalsize
  RWTH Aachen University\\
  \normalsize
  {grohe@informatik.rwth-aachen.de}
   \and 
  Kristian Kersting\thanks{Kristian Kersting was
    supported by the Fraunhofer ATTRACT fellowship STREAM and by the
    European Commission under contract number
    FP7-248258-First-MM. Martin Mladenov and Kristian Kersting were
    supported by the German Research Foundation DFG, KE 1686/2-1,
    within the SPP 1527, and the German-Israeli Foundation for
    Scientific Research and Development, 1180-218.6/2011.}\\
  \normalsize
    TU Dortmund University\\
    \normalsize
    {kersting@cs.tu-dortmund.de}
  \and 
  Martin Mladenov\\ 
  \normalsize
    TU Dortmund University\\
    \normalsize
    {martin.mladenov@cs.tu-dortmund.de}
    \and
  Erkal Selman\\
  \normalsize
    RWTH Aachen University\\
    \normalsize
    {selman@informatik.rwth-aachen.de}
  }
\date{}
\maketitle

\begin{abstract}
  Colour refinement is a basic algorithmic routine for graph
  isomorphism testing, appearing as a subroutine in almost all
  practical isomorphism solvers. It partitions the vertices of a graph
  into ``colour classes'' in such a way that all vertices in the same
  colour class have the same number of neighbours in every colour
  class. Tinhofer \cite{tin91}, Ramana, Scheinerman, and Ullman
  \cite{ramschull94} and Godsil \cite{god97} established a tight
  correspondence between colour refinement and fractional
  isomorphisms of graphs, which are solutions to the LP relaxation of
  a natural ILP formulation of graph isomorphism.

  We introduce a version of colour refinement for matrices and extend
  existing quasilinear algorithms for computing the colour
  classes. Then we generalise the correspondence between colour
  refinement and fractional automorphisms and develop a theory of
  fractional automorphisms and isomorphisms of matrices.

  We apply our results to reduce the dimensions of systems of linear
  equations and linear programs.
  Specifically, we show that any given LP $L$ can
  efficiently be transformed into a (potentially) smaller LP $L'$ whose number of variables and constraints is the number
  of colour classes of the colour refinement algorithm, applied to a
  matrix associated with the LP. The transformation is such that
  we can easily (by a linear mapping) map both feasible and
  optimal solutions back and forth between the two LPs. We demonstrate
  empirically that colour refinement can indeed greatly reduce the
  cost of solving linear programs. 
\end{abstract} 

\section{Introduction}

Colour refinement (a.k.a. ``naive vertex classification'' or ``colour
passing'') is a basic algorithmic routine for graph isomorphism
testing. It iteratively partitions, or colours, the vertices of a
graph according to an iterated degree sequence: initially, all
vertices get the same colour, and then in each round of the iteration
two vertices that so far have the same colour get different colours if
for some colour $c$ they have a different number of neighbours of
colour $c$. The iteration stops if in some step the partition remains
unchanged; the resulting partition is known as the \emph{coarsest
  equitable partition} of the graph. By refining the partition
asynchronously using Hopcroft's strategy of ``processing the smaller
half'' (for DFA-minimisation \cite{hop71}), the coarsest equitable
partition of a graph can be computed very efficiently, in time
$O((n+m)\log n)$ \cite{carcro82,paitar87} (also see \cite{berbongro13}
for a matching lower bound). A beautiful result due to Tinhofer \cite{tin91}, Ramana,
Scheinerman, and Ullman \cite{ramschull94} and Godsil \cite{god97}
establishes a tight correspondence between equitable partitions
of a graph and fractional automorphisms, which are solutions to the LP
relaxation of a natural ILP formulation of graph isomorphism.

In this paper, we introduce a version of colour refinement for
matrices (to be outlined soon) and develop a
theory of equitable partitions and fractional automorphisms and
isomorphisms of matrices. A surprising application of the theory is a
method to reduce the dimensions of systems of linear equations and
linear programs. 

When applied in the context of graph
isomorphism testing, the goal of colour refinement is to partition the
vertices of a graph as finely as possible; ideally, one would like to
compute the partition of the vertices into the orbits of the
automorphism group of the graph. In this paper, our goal is to
partition the rows and columns of a matrix as coarsely as possible. We
show that by ``factoring'' a matrix associated with a system of linear equations or a linear program through
an ``equitable partition'' of the variables and constraints, we
obtain a smaller system or LP equivalent to the original one, in the sense
that feasible and optimal solutions can be transferred back and forth
between the two via linear mappings that we can compute efficiently.
Hence we can use colour refinement as a simple and efficient
preprocessing routine for linear programming, transforming a given
linear program into an equivalent one in a lower dimensional space and
with fewer constraints. We demonstrate the effectiveness of this method experimentally.

Due to the ubiquity of linear programming, our method potentially has
a wide range of applications. Of course not all linear programs show
the regularities needed by our method to be effective. Yet some
do. This work grew out of applications in machine learning, or more
specifically, inference problems in graphical models. Actually, many
problems arising in a wide variety of other fields such as semantic
web, network communication, computer vision, and robotics can also be
modelled using graphical models. The models often have inherent
regularities, which are not exploited by classical inference
approaches such as loopy belief propagation. Symmetry-aware
approaches, see
e.g.~\cite{singla08aaai,kersting09uai,ahmadi2013mlj,bui12arxive}, run
(a modified) loopy belief propagation on the quotient model of the
(fractional) automorphisms of the graphical model and have been proven
successful in several applications such as link prediction, social
network analysis, satisfiability and boolean model counting
problems. Some of these approaches have natural LP formulations, and
the method proposed here is a strengthening of the
symmetry-aware approaches applied by the second and third author
(jointly with Ahmadi) in \cite{mladenov12aistats}.

\subsection*{Colour Refinement on Matrices}
Consider a matrix $A\in\RR^{V\times W}$.\footnote{We find it
  convenient to index the rows and columns of our matrices by elements
  of finite sets $V,W$, respectively, which we assume to be disjoint. $\RR^{V\times W}$
denotes the set of matrices with real entries and row and column
indices from $V$, $W$, respectively. The order of the rows and columns
of a matrix is irrelevant for us. We denote the entries of a matrix
$A\in\RR^{V\times W}$ by $A_{vw}$.}  We iteratively compute
partitions (or colourings) $\CP_i$ and $\CQ_i$ of the rows and columns of $A$, that is,
of the sets $V$ and $W$. We let $\CP_0=\{V\}$ and $\CQ_0=\{W\}$ be
the trivial partitions. To define $\CP_{i+1}$, we put two rows $v,v'$
in the same class if they are in the same class of $\CP_i$ and if for
all classes $Q$ of $\CQ_i$,
\begin{equation}\label{eq:eqt1}
\sum_{w\in Q}A_{vw}=\sum_{w\in Q}A_{v'w}.
\end{equation}
Similarly, to define $\CQ_{i+1}$, we put two columns $w,w'$
in the same class if they are in the same class of $\CQ_i$ and if for
all classes $P$ of $\CP_i$,
\begin{equation}\label{eq:eqt2}
\sum_{v\in P}A_{vw}=\sum_{v\in P}A_{vw'}.
\end{equation}
Clearly, for some $i\le |V|+|W|$ we have
$(\CP_i,\CQ_i)=(\CP_{i+1},\CQ_{i+1})=(\CP_j,\CQ_j)$ for all
$j\ge i$. We let $(\CP_\infty,\CQ_\infty):=(\CP_{i},\CQ_{i})$.
To see that this is a direct generalisation of colour refinement on
graphs, suppose that $A$ is a $0$-$1$-matrix, and view it as the
adjacency matrix of a bipartite graph $B_A$ with vertex set
$V\cup W$ and edge set $\{vw\mid A_{vw}\neq 0\}$. Then
the coarsest equitable partition of $A$ is equal to the
partition of $V(B_A)$ obtained by running colour refinement on $B_A$
starting from the partition $\{V,W\}$. More generally, we may
view every matrix as a weighted bipartite graph, and thus colour
refinement on matrices is just a generalisation of standard colour
refinement from graphs to weighted bipartite graphs. All
of our results also have a version for arbitrary weighted directed graphs,
corresponding to square matrices, but for the ease of presentation we
focus on the bipartite case here.

Adopting Paige and Tarjan's \cite{paitar87} algorithm for colour
refinement on graphs, we obtain an algorithm
that, given a sparse representation of a matrix $A$, computes
$(\CP_\infty,\CQ_\infty)$ in time $O((n+m)\log n)$, where
$n=|V|+|W|$ and $m$ is the total bitlength of all nonzero entries of $A$
(so that the input size is $O(n+m)$).

Slightly abusing terminology, we say that a \emph{partition} of a
matrix $A\in\RR^{V\times W}$ is a pair $(\CP,\CQ)$ of partitions of
$V$, $W$, respectively. Such a pair partitions the matrix into
``combinatorial rectangles''. A partition $(\CP,\CQ)$ of $A$ is
\emph{equitable} if for all $P\in\CP$, $Q\in\CQ$ and all $v,v'\in P$,
$w,w'\in Q$ equations \eqref{eq:eqt1} and \eqref{eq:eqt2} are
satisfied. It is easy to see that the partition
$(\CP_\infty,\CQ_\infty)$ computed by colour refinement is the
\emph{coarsest} equitable partition, in the sense that it is equitable
and all other
equitable partitions refine it. 

The key result that enables us to apply colour refinement to reduce
the dimensions of linear programs is a correspondence between
equitable partitions and \emph{fractional automorphisms} of a
matrix. We view an \emph{automorphism} of a matrix $A\in\RR^{V\times
  W}$ as a pair of
permutations of the rows and columns that leaves the matrix invariant,
or equivalently, a pair $(X,Y)\in\RR^{V\times V}\times\RR^{W\times W}$
of permutation matrices such that 
\begin{equation}\label{eq:fi}
  XA=AY.
\end{equation}
A \emph{fractional
  automorphism} of $A$ is a pair $(X,Y) \in\RR^{V\times
  V}\times\RR^{W\times W}$ of doubly stochastic matrices
satisfying \eqref{eq:fi}. 
We shall prove (Theorem~\ref{theo:aut}) that every equitable partition
of a matrix yields a fractional automorphism and, conversely, every
fractional isomorphism $(X,Y)$ yields an equitable partition. The
classes of this equitable partition are simply the strongly connected
components of the directed graphs underlying the square matrices $X,Y$.
This basic result is the foundation for everything else in this paper.

We proceed to studying \emph{fractional isomorphisms}
between matrices. Our goal is to be able to compare matrices across
different dimensions, for example, we would like to call the
$(1\times1)$-matrix with entry $2$ and the $(2\times2)$-matrix with four
$1$-entries fractionally isomorphic. The notion of fractional
isomorphism we propose may not be the most obvious one, but we show
that it is fairly robust. In particular, we prove a correspondence between fractional
isomorphisms and \emph{balanced equitable joint partitions} of two
matrices. Furthermore, we prove that fractionally
isomorphic matrices are equivalent when it comes to the solvability of
linear programs.

However, fractional isomorphism is still too fine as an equivalence
relation if we want to capture the solvability of linear programs. We propose
an even coarser equivalence relation between matrices that we call
\emph{partition equivalence}. The idea is that two matrices are
equivalent if they have ``isomorphic'' equitable partitions. We prove that two
linear programs with associated matrices that are partition equivalent
are equivalent in the sense that there are two linear mappings that
map the feasible solutions of one LP to the feasible solutions of the
other, and these mappings preserve optimality.

\subsection*{Application to Linear Programming}

Every matrix $A$ is partition equivalent to a matrix
$[A]$ obtained by ``factoring'' $A$ through its coarsest equitable
partition; we call $[A]$ the \emph{core factor} of $A$. We can repeat
this factoring process and go to matrices $[[A]],[[[A]]]]$, et
cetera, until we finally arrive at the \emph{iterated core factor}
$\llbracket A\rrbracket$. Now suppose that $A$ is associated with an
LP $L$, then $\llbracket A\rrbracket$ is associated with an LP
$\llbracket L\rrbracket$. To solve $L$, we compute $\llbracket
L\rrbracket$, which we can do efficiently using colour refinement. The
colour refinement procedure also yields the matrices that we need to
translate between the solution spaces of $L$ and $\llbracket
L\rrbracket$.  Then we solve $\llbracket L\rrbracket$ and translate
the solution back to a solution of $L$.

The potential of our method has been confirmed by our computational evaluation on a number of
benchmark LPs with symmetries present. Actually, the time spent
in total on solving the LPs --- reducing an LP and solving the reduced
LP --- is often an order of magnitude smaller than solving the
original LP directly. We have compared our method with a method
of symmetry reduction for LPs due to B\"odi, Grundh\"ofer and
Herr~\cite{bodgruher10}; the experiments show that our method is
substantially faster.

\begin{exa}\label{exa:intro}
  We consider a linear program in standard form:
\begin{equation}
  \tag{$L$}
  \begin{array}{rl}
    \text{min }&c^{\tr} x\\
\text{subject to }&Ax=b,\;x\ge 0,
  \end{array}
\end{equation}
where 
\[
A=\left(
\begin{array}{ccccccccccccc}
  3  &-1 &1  &\frac{1}{4}&\frac{1}{4}&\frac{1}{4}&\frac{1}{4}&0&0&3  &-2 &\frac{1}{2}&\frac{1}{2}\\
  -1 &1  &3  &\frac{1}{4}&\frac{1}{4}&\frac{1}{4}&\frac{1}{4}&0&0&-2 &3  &\frac{1}{2}&\frac{1}{2}\\
  1  &3  &-1 &\frac{1}{4}&\frac{1}{4}&\frac{1}{4}&\frac{1}{4}&0&0&\frac{1}{2}&\frac{1}{2}&\frac{1}{2}&\frac{1}{2}\\
  0  &\frac{1}{3}&\frac{2}{3}&0  &\frac{3}{2}&0  &\frac{3}{2}&2&0&1  &0  &-1 &0\\
  \frac{1}{3}&\frac{1}{3}&\frac{1}{3}&\frac{3}{2}&0  &\frac{3}{2}&0  &2&0&0  &1  &0  &-1\\
  \frac{1}{3}&\frac{1}{3}&\frac{1}{3}&0  &\frac{3}{2}&0  &\frac{3}{2}&0&2&-1 &0  &1  &0\\
  \frac{2}{3}&\frac{1}{3}&0  &\frac{3}{2}&0  &\frac{3}{2}&0  &0&2&0  &-1 &0  &1
\end{array}
\right),\quad
b=
\begin{pmatrix}
  1\\
  1\\
  1\\
  1\\
  1\\
  1\\
  1
\end{pmatrix},
\quad
c=
\begin{pmatrix}
  2\\
  2\\
  2\\
  \frac{3}{2}\\
  \frac{3}{2}\\
  \frac{3}{2}\\
  \frac{3}{2}\\
  1\\
  1\\
  \frac{1}{2}\\
  \frac{1}{2}\\
  \frac{1}{2}\\
  \frac{1}{2}
\end{pmatrix}
\]
We combine $A,b,c$ in a matrix
\[
\tilde A=\left(
\begin{array}{ccc|cccc|cc|cccc|c}
  3  &-1 &1  &\frac{1}{4}&\frac{1}{4}&\frac{1}{4}&\frac{1}{4}&0  &0  &3  &-2 &\frac{1}{2}&\frac{1}{2}&1\\
  -1 &1  &3  &\frac{1}{4}&\frac{1}{4}&\frac{1}{4}&\frac{1}{4}&0  &0  &-2 &3  &\frac{1}{2}&\frac{1}{2}&1\\
  1  &3  &-1 &\frac{1}{4}&\frac{1}{4}&\frac{1}{4}&\frac{1}{4}&0  &0  &\frac{1}{2}&\frac{1}{2}&\frac{1}{2}&\frac{1}{2}&1\\
  \hline
  0  &\frac{1}{3}&\frac{2}{3}&0  &\frac{3}{2}&0  &\frac{3}{2}&2  &0  &1  &0  &-1 &0  &1\\
  \frac{1}{3}&\frac{1}{3}&\frac{1}{3}&\frac{3}{2}&0  &\frac{3}{2}&0  &2  &0  &0  &1  &0  &-1 &1\\
  \frac{1}{3}&\frac{1}{3}&\frac{1}{3}&0  &\frac{3}{2}&0  &\frac{3}{2}&0  &2  &-1 &0  &1  &0  &1\\
  \frac{2}{3}&\frac{1}{3}&0  &\frac{3}{2}&0  &\frac{3}{2}&0  &0  &2  &0  &-1 &0  &1  &1\\
  \hline
  2  &  2&  2&\frac{3}{2}&\frac{3}{2}&\frac{3}{2}&\frac{3}{2}&1&1&\frac{1}{2}&\frac{1}{2}&\frac{1}{2}&\frac{1}{2}&\infty
\end{array}
\right)
\]
by putting $b,c^\tr$ in the last column, row, respectively. 
The lines subdividing the matrix indicate the coarsest equitable
partition. As the core factor of $\tilde A$ we obtain the matrix
\[
[\tilde A]=\left(
\begin{array}{cc|cc|c}
  3&1&0&2&1\\
  1&3&2&0&1\\
  \hline
  6&6&2&2&\infty
\end{array}
\right).
\]
Again, the lines subdividing the matrix indicate the coarsest equitable
partition. The core factor of $[\tilde A]$, which turns out to be the
iterated core factor of $\tilde A$, is
\[
\llbracket \tilde A\rrbracket=[[\tilde A]]=
\begin{pmatrix}
  4&2&1\\
  12&4&\infty
\end{pmatrix}
\]
This matrix corresponds to the LP
\begin{equation}
  \tag{$L'$}
  \begin{array}{rl}
    \text{min }&(c')^{\tr} x'\\
\text{subject to }&A'x'=b',\;x'\ge 0,
  \end{array}
\end{equation}
where 
\[
A'=(4\;\;2),\qquad
b'=(1),\qquad
c'=
\begin{pmatrix}
  12\\
  4
\end{pmatrix}.
\]
An optimal solution to ($L'$) is $x'=(0,\frac{1}{2})^\tr$. To map $x'$ to a
solution of the original LP ($L$), we multiply it with the following
matrix. 
\begin{equation}\label{eq:exa-intro}
D:=
\begin{pmatrix}
  1&0&0&0\\
  1&0&0&0\\
  1&0&0&0\\
  0&1&0&0\\
  0&1&0&0\\
  0&1&0&0\\
  0&1&0&0\\
  0&0&1&0\\
  0&0&1&0\\
  0&0&0&1\\
  0&0&0&1\\
  0&0&0&1\\
  0&0&0&1
\end{pmatrix}
\begin{pmatrix}
  1&0\\
  1&0\\
  0&1\\
  0&1
\end{pmatrix}
=
\begin{pmatrix}
  1&0\\
  1&0\\
  1&0\\
  1&0\\
  1&0\\
  1&0\\
  1&0\\
  0&1\\
  0&1\\
  0&1\\
  0&1\\
  0&1\\
  0&1
\end{pmatrix}
\end{equation}
We will see later where this matrix comes from.
It can be checked that
\[
x:=Dx'=(0,0,0,0,0,0,0,\frac{1}{2},\frac{1}{2},\frac{1}{2},\frac{1}{2},\frac{1}{2},\frac{1}{2})^\tr
\] 
is indeed $x$ is a minimal solution to ($L$).
\end{exa}

\subsection*{Related Work}
Using automorphisms to speed-up solving optimisation problems has attracted a lot of attention in the literature (e.g. \cite{boedi13,Boyd03fastestmixing,bremner2009polyhedral,Gatermann200495,liberti12,beek05}). 
Most relevant for us is work focusing on integer and
linear programming. For ILPs, methods typically focus on pruning 
the search space to eliminate symmetric solutions, see e.g. \cite{margot10} for a survey). In
linear programming, however, one takes advantage of convexity and projects the LP
into the fixed space of its symmetry group~\cite{boedi13}.
As we will see (in Section~\ref{sec:symm}), our approach subsumes this method.
The second and third author (together with Ahmadi) observed that equitable partitions can compress 
LPs, as they preserve message-passing computations within the log-barrier method~\cite{mladenov12aistats}. 
The present paper builds upon that observation, giving a rigorous theory of dimension reduction using colour-refinement, and connecting to existing symmetry approaches through the notion of fractional automorphisms.
Moreover, we show that the resulting
  theory yields a more general notion of fractional automorphism
  that ties in nicely with the linear-algebra framework and
  potentially leads to even better reductions than the purely
  combinatorial approach of~\cite{mladenov12aistats}.

\section{Preliminaries}

We use a standard notation for graphs and digraphs.
In a graph $G$, we let $N^G(v)$ denote
the set of neighbours of vertex $v$, and in a digraph $D$ we let
$N_+^D(v)$ and $N_-^D(v)$ denote, respectively, the sets of out-neighbours and
in-neighbours of $v$.

We have already introduced some basic matrix notation in the
introduction. 
A \emph{permutation matrix} is a $0$-$1$-matrix that has exactly one $1$
in every row and column.
We call two matrices $A^1\in\RR^{V^1\times W^1}$ and
$A^2\in\RR^{V^2\times W^2}$ \emph{isomorphic} (and write $A^1\cong
A^2$) if there are bijective mappings $\pi:V^1\to
V^2$ and $\rho:W^1\to W^2$ such that $A^1_{vw}=A^2_{\pi(v)\rho(w)}$ for all
$v\in V^1,w\in W^1$. Equivalently, $A^1$ and $A^2$ are isomorphic if
there are permutation matrices $X\subseteq\RR^{V^2\times V^1}$ and
$Y\subseteq\RR^{W^2\times W^1}$ such that $XA^1=A^2Y$. 

A matrix $X\in\RR^{V\times W}$ is \emph{stochastic} if
it is nonnegative and $\sum_{w\in W}X_{vw}=1$ for all $v\in V$. It is
\emph{doubly stochastic} if both $X$ and its transpose $X^{\tr}$ are
stochastic. Observe that a doubly stochastic matrix is always
square. 

The \emph{direct sum} of two matrices $A^1\in\RR^{V^1\times W^1}$ and
$A^2\in\RR^{V^2\times W^2}$ is the matrix
\[
A^1\oplus A^2:=
\begin{pmatrix}
  A^1&0\\
  0&A^2
\end{pmatrix}
\]
With
every matrix $A\subseteq \RR^{V\times W}$ we associate its
\emph{bipartite graph} $B_A$ with vertex set $V\cup W$ and
edge set $\{vw\mid A_{vw}\neq\emptyset\}$. The matrix $A$ is
\emph{connected} if $B_A$ is connected. (Sometimes, this is called
\emph{decomposable}.) Note that $A$ is not connected if and only if it is
isomorphic to matrix that can be written as the direct sum of two
matrices. A \emph{connected component} of $A$ is a submatrix $A'$ whose rows and
columns form the vertex set of a connected component of the bipartite
graph $B_A$.
With every square matrix
$A\in\RR^{V\times V}$ we associate two more graphs: the directed graph
$D_A$ has vertex set $V$ and edge set $\{(v,v')\mid
A_{vv'}\neq 0\}$. The graph $G_A$ is the underlying undirected graph
of $D_A$. We call $A$ \emph{strongly connected} if the graph $D_A$ is
strongly connected. (Sometimes, this is called \emph{irreducible}.)
It is not hard to see that a doubly stochastic matrix $X$ is strongly
connected if and only if the graph $G_A$ is connected.

Let $A\in\RR^{V\times
  W}$. For all subsets
$P\subseteq V, Q\subseteq W$, we let
\begin{equation}\label{eq:e0}
F^A(P,Q)=\sum_{(v,w)\in P\times Q}A_{vw}.
\end{equation}
If we interpret $A$ as a weighted bipartite graph, then $F^A(P,Q)$ is the total weight of the edges
from $P$ to $Q$.
We write
$F^A(v,Q)$, $F^A(P,w)$ instead of $F^A(\{v\},Q)$, $F^A(P,\{w\})$.
Recall that a \emph{partition of $A$} is a
pair $(\CP,\CQ)$, where $\CP$ is a partition of the set $V$ of row
indices and $\CQ$ is a partition of the set $W$ of column indices. 
Using the function $F$, we can express the conditions \eqref{eq:eqt1}
and \eqref{eq:eqt2} for a partition
being \emph{equitable} as 
\begin{align}
\label{eq:e1}
F(v,Q)&=F(v',Q)&\text{for all }v,v'\in P;\\
\label{eq:e2}
F(P,w)&=F(P,w')&\text{for all }w,w'\in Q.
\end{align}
for all
$P\in\CP,Q\in \CQ$.

A \emph{convex combination} of numbers $a_i$ is a sum $\sum_i\lambda_i
a_i$ where $\lambda_i\ge 0$ for all $i$ and $\sum_i\lambda_i=1$.  If
$\lambda_i>0$ for all $i$, we call the convex combination
\emph{positive}. We need the following simple (and well-known) lemma
about convex combinations.

\begin{lem}\label{lem:invisible_hand}
  Let $D$ be a strongly connected digraph.  Let $f: V(D)\rightarrow
  \RR$, such that for every $v\in V(D)$, the number $f(v)$ is a
  positive convex combination of all $f(w)$ for $w\in N_+(v)$.  Then
  $f$ is constant.
\end{lem}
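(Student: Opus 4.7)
The plan is to exploit the maximum principle. Let $M := \max_{v \in V(D)} f(v)$ and let $S := \{v \in V(D) : f(v) = M\}$. Since $V(D)$ is finite and nonempty, $S$ is nonempty.

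The key step is to show $S$ is closed under taking out-neighbours. Fix any $v \in S$, and write the positive convex combination as $f(v) = \sum_{w \in N_+(v)} \lambda_w f(w)$ with $\lambda_w > 0$ and $\sum_w \lambda_w = 1$. Since $f(w) \le M$ for every $w$, we get
\begin{equation*}
M = f(v) = \sum_{w \in N_+(v)} \lambda_w f(w) \le \sum_{w \in N_+(v)} \lambda_w M = M.
\end{equation*}
Equality forces $\lambda_w(M - f(w)) = 0$ for each $w \in N_+(v)$, and because each $\lambda_w > 0$, this yields $f(w) = M$, i.e.\ $w \in S$. Hence $N_+(v) \subseteq S$.

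Now the strong connectivity of $D$ finishes the proof. Pick any $v_0 \in S$. For every $u \in V(D)$ there is a directed path $v_0, v_1, \dots, v_k = u$ in $D$, and by induction on the path length (using the closure property just established) each $v_i$ lies in $S$. Therefore $S = V(D)$, which is exactly the statement that $f$ is constant.

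The only subtlety worth flagging is that positivity of the convex combination is essential: without it, a zero coefficient at a non-maximising neighbour would destroy the equality argument, and one could only conclude $f(w) = M$ for those $w$ with $\lambda_w > 0$. With positivity, however, the maximum propagates along every out-edge, and strong connectivity then spreads it to the whole vertex set. I expect no real obstacles; the argument is a standard discrete-harmonic maximum principle.
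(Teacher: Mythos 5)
Your proof is correct and is essentially the same argument as the paper's: both rely on the discrete maximum principle, using positivity of the convex combination to force every out-neighbour of a maximiser to also attain the maximum, and strong connectivity to propagate this to all of $V(D)$. The paper merely packages it as a contradiction along a path from a maximiser to a non-maximiser, whereas you state the closure property directly; the content is identical.
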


\begin{proof}
  Suppose for contradiction that $f$ satisfies the
  assumptions, but is not constant.  Let $v\in V(D)$ be a vertex with
  maximum value $f(v)$ and $w\in V(D)$ such that $f(w)<f(v)$. Let $P$
  be a path from $v$ to $w$. Then $P$ contains an edge $v'w'$ such
  that $f(v)=f(v')>f(w')$. By the maximality of $f(v)$, for all
  $w''\in N_+(v')$ it holds that $f(v')\ge f(w'')$, and this
  contradicts $f(v')$ being a positive convex combination of the
  $f(w'')$ for $w''\in N_+(v')$.
\end{proof}

Sometimes, we consider matrices with entries from
$\ORR=\RR\cup\{\infty\}$. We will only form linear combinations of
elements of $\ORR$ with nonnegative real coefficients, using the rules
$r+\infty=\infty+r=\infty$ for all $r\in\ORR$ and $0\cdot\infty=0$,
$r\cdot\infty=\infty$ for $r>0$.

All our results hold for rational and real matrices and vectors. For the
algorithms, we assume the input matrices and vectors to be
rational. To analyse the algorithms, we use a standard RAM model.

\section{Colour Refinement in Quasilinear Time}\label{sec:alg}

In this section, we describe an algorithm that computes the coarsest
equitable partition of a matrix $A\in\RR^{V\times W}$ in time
$O((n+m)\log n)$. Here $n:=|V|+|W|$ and $m$ is the total bitlength
of all nonzero entries of $A$. (We use this notation for the rest of
this section.)

To
describe the algorithm, we view $A$ as a weighted bipartite graph with
vertex set $V\cup W$ and edges with nonzero weights representing
the nonzero matrix entries. For every vertex $u\in V\cup W$ and every
set $C\subseteq V\cup W$ of
vertices, we let $F^*(u,C)$ be the sum of the weights of the edges incident
with $u$. That is, $F^*(v,C)=F(v,C\cap W)$ for $v\in V$ and
$F^*(w,C)=F(C\cap V,w)$ for $w\in W$. Moreover, for every subset
$C\subseteq V\cup W$, we let $m_C$ be the total bitlength of the weight
of all edges incident with a vertex in $C$. For a vertex $u$, we write
$m_u$ instead of
$m_{\{u\}}$. Note that $m=\sum_{u\in V}m_u=\sum_{u\in W}m_u$.

We consider the problem of computing the coarsest equitable partition
of $A$. 
A naive implementation of the iterative refinement procedure described
in the introduction would yield a running time that is (at least)
quadratic: in the worst case, we need $n$ refinement
rounds, and each round takes time  $\Omega(n+m)$.

A significant improvement can be achieved if the refinement steps are
carried out asynchronously, using a strategy that goes back to
Hopcroft's algorithm for minimising deterministic finite automata
\cite{hop71}. The idea is as follows. The algorithm
maintains partitions $\CC$ of $V\cup W$. We call the classes of $\CC$
\emph{colours}. Initially,
$\CC=\{V,W\}$. Furthermore, the algorithm keeps a stack $S$ that holds
some colours that we still want to use for refinement in the
future. Initially, $S$ holds $V,W$ (in either order). In each
refinement step, the algorithm pops a colour $D$ from the stack. We
call $D$ the \emph{refining colour} of this refinement step. For
all $u\in V\cup W$ we compute the value $F^*(v,D)$. Then
for each colour $C$ in the current partition that has at least one
neighbour in $D$, we partition $C$ into new classes $C_1,\ldots,C_k$
according to the values $F^*(u,D)$. Then we replace $C$ by
$C_1,\ldots,C_k$ in the partition $\CC$. Moreover, we add all classes
among $C_1,\ldots,C_k$ \emph{except for the largest} to the stack
$S$. If we use the right data structures, we can carry out such a
refinement step with refining colour $D$ in time $O(|D|+m_D)$. Compared to the standard,
unweighted version of colour refinement, the weights add some
complication when it comes to computing the partition $C_1,\ldots,C_k$
of $C$. We can handle this by standard vector partitioning techniques,
running in time linear in the total bitlength of the weights involved.
By not adding the largest
among the classes $C_1,\ldots,C_k$ to the stack, we achieve that every
vertex $u$ appears at most $\log n$ times in a refining colour
$D$. Whenever $u$ appears in the refining colour, it contributes $O(1+m_u)$
to the cost of that refinement step. Thus the overall cost is
$\sum_{u\in V\cup W}O(1+m_u)\log n= O((n+m)\log n)$. We refer the
reader to \cite{berbongro13,paitar87} for details on the algorithm
(for the unweighted case) and
its analysis.

\begin{theo}\label{theo:alg}
  There is an algorithm
that, given a sparse representation of a matrix $A$, computes the
coarsest equitable partition of $A$ in time $O((n+m)\log n)$.
\end{theo}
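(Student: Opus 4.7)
The plan is to instantiate in the bipartite setting the asynchronous refinement strategy of Hopcroft and Paige--Tarjan sketched in the paragraphs preceding the theorem, and to treat the one new wrinkle that arises from real-valued weights. View $A$ as a weighted bipartite graph on $V \cup W$. The algorithm maintains a partition $\CC$ of $V \cup W$ into colours together with a stack $S$ of colours still waiting to be used as refiners. Initially $\CC = \{V, W\}$ and $S$ contains both $V$ and $W$. While $S$ is nonempty the algorithm pops a refining colour $D$, computes $F^*(u, D)$ for every $u$ incident with some vertex of $D$, and for every $C \in \CC$ that meets the neighbourhood of $D$ splits $C$ into the groups $C_1, \ldots, C_k$ on which $F^*(\cdot, D)$ is constant, replaces $C$ in $\CC$ by $C_1, \ldots, C_k$, and pushes all of $C_1, \ldots, C_k$ except one of maximum cardinality onto $S$.

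For correctness I would prove by induction on the number of loop iterations the invariant that the coarsest equitable partition of $A$ refines $\CC$, and that whenever two vertices in a common class of $\CC$ ought to be separated in the coarsest equitable partition, the eventual processing of the colours currently on $S$ suffices to separate them. The usual Hopcroft argument carries over: once a colour $E$ has been split into $E_1, \ldots, E_k$, the identity $F^*(u,E) = \sum_i F^*(u, E_i)$ shows that refining against any $k-1$ of the $E_i$'s realises every distinction that a refinement against the remaining $E_i$ could create. Hence it is safe to omit one sub-class of maximum size from the stack. When $S$ becomes empty the invariant forces $\CC$ to equal the coarsest equitable partition.

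For the running time I use the standard amortised analysis. Charge each vertex $u$, in each step in which $u$ lies in the refining colour $D$, an amount $O(1 + m_u)$ that covers computing $u$'s contribution to the $F^*$-values and placing $u$ into its sub-class. Because a colour is pushed onto $S$ only when it is a non-largest sub-class of some split, the colour containing $u$ at least halves between two consecutive occurrences of $u$ inside a refining colour; therefore $u$ is charged at most $\lceil \log_2 n\rceil$ times, and summing gives a total of $O\bigl((n + m)\log n\bigr)$.

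The main obstacle, and the one point where the argument departs from the unweighted case, is to execute a single refinement step with refining colour $D$ in time $O(|D| + m_D)$ despite having rational weights: given a colour $C$, one has to group its vertices by the value of $F^*(\cdot, D)$ fast enough to fit inside that budget. I would handle this with a standard vector-partitioning primitive, encoding each value $F^*(u, D)$ as a canonical rational string whose length is bounded by the bitlength of the weights on the edges from $u$ into $D$, and then radix-sorting the resulting strings. The total length of all strings produced in one step is $O(|D| + m_D)$, because each contribution to an $F^*$-value corresponds to a distinct edge incident with $D$, so the sort fits within the charge. With this primitive in place, the amortised analysis above yields the claimed bound.
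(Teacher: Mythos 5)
Your proposal is correct and follows essentially the same route as the paper: the Hopcroft/Paige--Tarjan asynchronous refinement with a stack of refining colours, omission of the largest sub-class after each split, the standard halving/amortisation argument charging each vertex $O(1+m_u)$ per appearance in a refining colour, and the treatment of weights by a linear-time vector-partitioning (radix-sort on canonical encodings) step. The paper gives exactly this sketch and defers the remaining details to the cited references, so no substantive difference remains.
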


\section{Fractional Automorphisms}

Recall that a \emph{fractional automorphism} of a matrix
$A\in\RR^{V\times W}$ is a pair $(X,Y)\in\RR^{V\times
  V}\times\RR^{W\times W}$ of doubly stochastic matrices such that
\begin{equation}\label{eq:fi0}
XA=AY.
\end{equation}
In this section, we prove the theorem relating fractional
automorphisms to equitable partitions. For every pair $(X,Y)\in\RR^{V\times
  V}\times\RR^{W\times W}$ of matrices we let $\CP_X$ be the partition
of $V$ into the strongly connected components of $X$, and we let $\CQ_Y$ be the partition
of $W$ into the strongly connected components of $Y$. Conversely, for every
partition $(\CP,\CQ)$ of $A$, we let $X_{\CP}\in\RR^{V\times V}$\label{page:XP} be the
matrix with entries $X_{vv'}:=1/|P|$ if $v,v'\in P$ for some $P\in\CP$
and $X_{vv'}:=0$ otherwise, and we let $Y_{\CQ}\in\RR^{W\times W}$ be the
matrix with entries $Y_{ww'}:=1/|Q|$ if $w,w'\in Q$ for some $Q\in\CQ$
and $Y_{vv'}:=0$ otherwise.

\begin{theo}\label{theo:aut}
  Let $A\in\RR^{V\times W}$.
  \begin{enumerate}
  \item If $(\CP,\CQ)$ is an equitable partition of $A$, then
    $(X_\CP,Y_\CQ)$ is a fractional automorphism.
  \item If $(X,Y)$ is a fractional automorphism of $A$, then
    $(\CP_X,\CQ_Y)$ is an equitable partition.
  \end{enumerate}
\end{theo}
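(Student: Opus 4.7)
I would prove the two parts separately. Part (1) is a direct entrywise calculation, while part (2) combines the Invisible Hand Lemma (Lemma~\ref{lem:invisible_hand}) with a basic structural fact about doubly stochastic matrices.

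For part (1), the first step is to observe that $X_\CP$ and $Y_\CQ$ are doubly stochastic: each row and column indexed by an element of a class $P\in\CP$ (resp.\ $Q\in\CQ$) has exactly $|P|$ (resp.\ $|Q|$) nonzero entries, each equal to $1/|P|$ (resp.\ $1/|Q|$). The core computation is to verify $X_\CP A=AY_\CQ$. For $v\in P$ and $w\in Q$ one gets
\[
(X_\CP A)_{vw}=\frac{1}{|P|}\sum_{v'\in P}A_{v'w}=\frac{F(P,w)}{|P|},\qquad
(AY_\CQ)_{vw}=\frac{1}{|Q|}\sum_{w'\in Q}A_{vw'}=\frac{F(v,Q)}{|Q|}.
\]
Equitability (\eqref{eq:e1} and \eqref{eq:e2}) gives $F(v,Q)=F(P,Q)/|P|$ and $F(P,w)=F(P,Q)/|Q|$, so both entries equal $F(P,Q)/(|P|\cdot|Q|)$.

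For part (2), I would begin with the structural lemma that a doubly stochastic matrix $X\in\RR^{V\times V}$ is block diagonal with respect to the partition $\CP_X$ into strongly connected components of $D_X$: there are no edges of $D_X$ between distinct SCCs. This follows by peeling off a source SCC $P$ of the condensation of $D_X$ and using that the column sums $\sum_{v'\in V}X_{v'v}=1$ for $v\in P$ already exhaust the $|P|$ units of total weight inside $P\times P$, so the row-sum constraint forbids any outgoing edges from $P$; iterate on the remaining SCCs. In particular, for every $P\in\CP_X$ the restriction $X[P\times P]$ is itself doubly stochastic and $D_X[P]$ is strongly connected, and analogously for $Y$ and $\CQ_Y$.

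The main step is then to show that for every $P\in\CP_X$ and $Q\in\CQ_Y$ the function $f\colon V\to\RR$ defined by $f(v):=F(v,Q)$ is constant on $P$, which is exactly condition~\eqref{eq:e1}. Summing the identity $(XA)_{vw}=(AY)_{vw}$ over $w\in Q$ and using the block-diagonal structure of $Y$ (so that $\sum_{w\in Q}Y_{w'w}=1$ for $w'\in Q$ and $0$ otherwise) yields
\[
\sum_{v'\in P(v)}X_{vv'}\,f(v')=\sum_{w'\in Q}A_{vw'}=f(v),
\]
where $P(v)$ denotes the SCC of $v$. Hence $f(v)$ is a positive convex combination of the values $f(v')$ indexed by $v'\in N_+^{D_X}(v)\subseteq P(v)$. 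Applying Lemma~\ref{lem:invisible_hand} to the strongly connected digraph $D_X[P]$ forces $f$ to be constant on $P$. The symmetric computation, summing the identity over $v\in P$ and invoking the lemma on $D_Y[Q]$, establishes~\eqref{eq:e2}.

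\textbf{Main obstacle.} The only non-trivial ingredient is the block-diagonal decomposition of a doubly stochastic matrix along its strongly connected components; without it one cannot localise the convex combination extracted from $XA=AY$ to a single strongly connected digraph, and Lemma~\ref{lem:invisible_hand} does not apply. Everything else is algebraic bookkeeping and symmetry between rows and columns.
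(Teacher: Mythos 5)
Your proposal is correct and follows essentially the same route as the paper: part (1) is the same entrywise computation with the double-counting identity $F(P,Q)=|P|\,F(v,Q)=|Q|\,F(P,w)$, and part (2) derives the same positive convex combination $F(v,Q)=\sum_{v'}X_{vv'}F(v',Q)$ from $XA=AY$ and closes with Lemma~\ref{lem:invisible_hand}. The only difference is that you spell out the block-diagonal structure of a doubly stochastic matrix along its strongly connected components (via the source-SCC argument), a fact the paper merely asserts in the preliminaries and invokes implicitly in the proof.
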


\begin{proof}[Proof of Theorem~\ref{theo:aut}]
  To prove (1), let $(\CP,\CQ)$ be an equitable partition of $A$, and
  let $X:=X_\CP$ and $Y:=Y_{\CQ}$. Let $v\in V,w\in
    W$, and let $P\in\CP$ and $Q\in\CQ$ be the
    classes of $v$ and $w$, respectively.  Then
    \begin{equation}\label{eq:e5}
    (X A)_{vw} =
    \sum_{v'\in P} \frac{1}{|P|} \cdot A_{v'w} =
    \frac{1}{|P|} \cdot F(P,w) \overset{(a)}{=}
    \frac{1}{|Q|} \cdot F(v,Q) =
    \sum_{w'\in Q} A_{vw'}\cdot\frac{1}{|Q|} =
    (A Y)_{vw}
    \end{equation}
    {\sloppy
    Equality $(a)$ can be established by a double-counting
    argument:  we have
    $
    F(P,Q) =\sum_{v'\in P} F(v',Q) = |P|\cdot F(v,Q)
    $
    by \eqref{eq:e1} and
    $
    F(P,Q) = \sum_{w'\in Q} F(P,w') = |Q|\cdot F(P,w)
    $
    by \eqref{eq:e2}.
  }

    To prove (2), let $(X,Y)$ be a fractional automorphism of $A$.  Let
    $P\in\CP_X$ and $Q\in\CQ_Y$. We need to prove that $P,Q$ satisfy
    \eqref{eq:e1} and \eqref{eq:e2}. 

    We first prove
    \eqref{eq:e1}.
    For every $v\in P$, we have
    \begin{equation}\label{eq:e6}
    \begin{split}
      F(v,Q)
      &= \sum_{w'\in Q} A_{vw'}
      \overset{(b1)}{=} \sum_{w'\in Q} A_{vw'} \sum_{w\in Q} Y_{w'w}
      = \sum_{w\in Q}\sum_{w'\in Q} A_{vw'} Y_{w'w}  \\
      &\overset{(b2)}{=} \sum_{w\in Q}\sum_{v'\in P} X_{vv'} A_{v'w}
      = \sum_{v'\in P} X_{vv'} \underbrace{\sum_{w\in Q} A_{v'w}}_{=F(v',Q)}
      \overset{(b3)}{=} \sum_{v'\in N^{D_X}_+(v)} X_{vv'} \cdot F(v',Q),
    \end{split}
    \end{equation}
    Equation $(b1)$ holds because $\sum_{w\in W}Y_{w'w}=1$ and $Y_{w'w}=0$
    for $w'\in Q,w\not\in Q$. Here we use that $Q$, which by
    definition is a strongly connected component of the digraph $D_Y$,
    is also a connected component of the undirected graph $G_Y$.
    Equation $(b2)$ holds by $XA=AY$ and $Y_{w'w}=0$ for $w'\not\in Q,
    w\in Q$ and $X_{vv'}=0$ for $v\in P,v'\not\in P$. Equation $(b3)$
    holds, because $N^{D_X}_+(v)\subseteq P$ and
    $X_{vv'}\neq 0\iff v'\in N^{D_X}_+(v)$.
    As the matrix $X$ is stochastic, this implies that $F(v,Q)$
    is a positive convex combination of the $F(v',Q)$ for $v'\in
    N^{D_X}_+(v)$. As $P$ is the vertex set of a strongly connected
    component of $D_X$, by Lemma~\ref{lem:invisible_hand}, it follows that
      $F(v,Q)=F(v',Q)$ for all $v,v'\in P$. This proves
      \eqref{eq:e1}.

      \eqref{eq:e2} can be proved similarly.
   \end{proof}

\section{Fractional Isomorphisms}

In this section, we want to relate different matrices by ``fractional
isomorphisms''.
Let us first review the natural notion of fractional isomorphisms of
graphs: if $G^1$ and $G^2$ are (undirected) graphs with vertex sets
$V^1,V^2$, respectively, and $A^1\in\{0,1\}^{V^1\times V^1}$,
$A^2\in\{0,1\}^{V^2\times V^2}$ are their adjacency matrices, then $G^1$
and $G^2$ are \emph{fractionally isomorphic} if there is a doubly
stochastic matrix $X\in\RR^{V^2\times V^1}$ such that
$XA^1=A^2X$. Note that this implies $|V^1|=|V^2|$, because doubly
stochastic matrices are square. Viewing matrices as weighted bipartite
graphs, it is straightforward to generalise this notion of fractional
isomorphism to pairs of matrices of the same dimensions: let $A^1\in\RR^{V^1\times W^1}$,
$A^2\in\RR^{V^2\times W^2}$ with $|V^1|=|V^2|$ and $|W^1|=|W^2|$. We
could call $A^1$ and $A^2$ fractionally isomorphic if there is a
pair $(X,Y)\in\RR^{V^2\times V^1}\times\RR^{W^2\times W^1}$ of doubly
stochastic matrices such that
\begin{align}
  \label{eq:i1a}
  XA^1&=A^2Y\\
   \label{eq:i1b}
   X^{\tr}A^2&=A^1Y^{\tr}.
\end{align}
(We need both equations because the full adjacency matrix of the
weighted bipartite graph represented by a matrix $A$ is $
\begin{pmatrix}
  0&A\\A^{\tr}&0
\end{pmatrix}
$.) If we were only interested in fractional isomorphisms between
matrices of the same dimensions, this would be a perfectly reasonable
definition. It is not clear, though, how to generalise it to
matrices of different dimensions, and in a sense this whole paper is
about similarities between matrices of different dimensions.  If the
matrices have different dimensions, we need to drop the requirement of
$X,Y$ being doubly stochastic. The first idea would be to require
$X,Y$ to be
stochastic matrices with constant column sums. The resulting notion of
fractional isomorphism is the right one for connected
matrices, but
has the backdraw that it is not closed under direct sums. That is,
there are matrices $A^1,A^2,A^3,A^4$ such that $A^1,A^2$ and $A^3,A^4$
are fractionally isomorphic in this sense, but the direct sums
$A^1\oplus A^3$ and $A^2\oplus A^4$ are not (see
Example~\ref{exa:equi}). However, closure under direct sums  is something we might expect of
something called ``isomorphism''.

There is an alternative approach to defining fractional
isomorphisms that is more robust, but 
equivalent for connected matrices. The starting point is the
observation that isomorphisms between two graphs correspond to automorphisms
of their disjoint union where each vertex of the first graph is mapped
to a vertex of the second and vice versa. Replacing automorphisms by
fractional automorphisms, this leads to the
following definition. As above, we consider matrices $A^1\in\RR^{V^1\times W^1}$ and $A^2\in\RR^{V^2\times
  W^2}$, where we
assume the sets $V^1,V^2,W^1,W^2$ to be mutually
disjoint. A \emph{fractional isomorphism} from $A^1$ to $A^2$ is a
fractional automorphism $(X,Y)$ of the direct sum
\[
A^1\oplus A^2=\begin{pmatrix}A^1&0\\0&A^2\end{pmatrix}\in\RR^{(V^1\cup
  V^2)\times(W^1\cup W^2)}
\]
such that, for $j=1,2$, for every $v^j\in V^j$ there is a $v^{3-j}\in
V^{3-j}$ with $X_{v^jv^{3-j}}\neq 0$ and for every $w^j\in W^j$ there
is a $w^{3-j}\in W^{3-j}$ with $Y_{w^jw^{3-j}}\neq 0$. The matrices
$A^1$ and $A^2$ are \emph{fractionally isomorphic} (we write
$A^1\simeq A^2$) if there is a fractional isomorphism from $A^1$ to
$A^2$. It is not obvious that fractional isomorphism is an equivalence
relation; this will be a consequence of the characterisation of
fractional isomorphism by equitable partitions (see Corollary~\ref{cor:fieq}).
It is also not clear that for connected matrices this notion of fractional isomorphism
coincides with the one discussed above; this is the content of Theorem~\ref{theo:equi}.

\begin{exa}
  The following five matrices are fractionally isomorphic:
  \[
  A^1=
  \begin{pmatrix}
    2
  \end{pmatrix},\quad
  A^2=
  \begin{pmatrix}
    1&1\\
    1&1
  \end{pmatrix},\quad
  A^3=
  \begin{pmatrix}
    1&1&0\\
    1&0&1\\
    0&1&1
  \end{pmatrix},\quad
  A^4=
  \begin{pmatrix}
    1&1&0&0\\
    1&0&1&0\\
    0&1&0&1\\
    0&0&1&1
  \end{pmatrix},\quad
  A^5=
  \begin{pmatrix}
    1&1&0&0&0\\
    1&1&0&0&0\\
    0&0&1&1&0\\
    0&0&1&0&1\\
    0&0&0&1&1
  \end{pmatrix}
  \]
  For all $i,j\in[5]$, a pair of matrices with all identical entries
  is a fractional isomorphism.
\end{exa}

We now relate fractional isomorphisms to the colour refinement
algorithm and equitable partitions. The standard way of running colour
refinement on two graphs is to run it on their disjoint union. We do
the same for matrices, using the direct sum instead of the
disjoint union. Let $A^1\in\RR^{V^1\times W^1}$ and
$A^2\in\RR^{V^2\times W^2}$, where $V^1,W^1,V^2,W^2$ are mutually
disjoint.  A \emph{joint partition} of $A^1,A^2$ is a partition of
$A^1\oplus A^2$, that is, a pair $(\CP,\CQ)$ of partitions of $V^1\cup
V^2$ and $W^1\cup W^2$, respectively. A joint partition $(\CP,\CQ)$ of
$A^1,A^2$ is \emph{balanced} if all $P\in\CP$ have a nonempty
intersection with both $V^1$ and $V^2$ and all $Q\in\CQ$ have a
nonempty intersection with both $W^1$ and $W^2$.
A joint partition $(\CP,\CQ)$ of $A^1,A^2$ is \emph{equitable} if it
is an equitable partition of $A^1\oplus A^2$. 
We can compute the coarsest equitable joint partition of $A^1,A^2$
using colour refinement. 

\begin{theo}
  For all matrices $A^1,A^2$, the following three statements are
  equivalent.
  \begin{enumerate}
  \item $A^1$ and $A^2$ are fractionally isomorphic.
  \item $A^1$ and $A^2$ have a balanced equitable joint partition.
  \item The coarsest equitable joint partition of $A^1$ and $A^2$ is balanced.
  \end{enumerate}
\end{theo}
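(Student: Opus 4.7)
The plan is to establish the cyclic chain of implications $(3) \Rightarrow (2) \Rightarrow (1) \Rightarrow (3)$. The implication $(3) \Rightarrow (2)$ is immediate, since the coarsest equitable joint partition being balanced is itself a witness that a balanced equitable joint partition exists.

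For $(2) \Rightarrow (1)$, I would start from a balanced equitable joint partition $(\CP,\CQ)$ of $A^1,A^2$, which by definition is an equitable partition of $A^1 \oplus A^2$. Theorem~\ref{theo:aut}(1) then yields that $(X_\CP, Y_\CQ)$ is a fractional automorphism of $A^1 \oplus A^2$, and only the non-degeneracy condition defining a fractional isomorphism remains to be checked. Given $v^j \in V^j$ with class $P \in \CP$, balancedness guarantees $P \cap V^{3-j} \neq \emptyset$, so any $v^{3-j} \in P \cap V^{3-j}$ satisfies $(X_\CP)_{v^j v^{3-j}} = 1/|P| \neq 0$. The analogous argument using balancedness of $\CQ$ handles $Y_\CQ$.

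For $(1) \Rightarrow (3)$, I would start from a fractional isomorphism $(X,Y)$ and apply Theorem~\ref{theo:aut}(2) to obtain the equitable joint partition $(\CP_X, \CQ_Y)$ of $A^1 \oplus A^2$. The crux is to show that $(\CP_X, \CQ_Y)$ is balanced: fix $v^j \in V^j$; the fractional-isomorphism condition supplies some $v^{3-j} \in V^{3-j}$ with $X_{v^j v^{3-j}} \neq 0$, which produces an edge between $v^j$ and $v^{3-j}$ in the undirected graph $G_X$. Since $X$ is doubly stochastic, the preliminaries tell us that connectivity in $G_X$ coincides with strong connectivity in $D_X$ (applied componentwise), so $v^j$ and $v^{3-j}$ lie in the same class of $\CP_X$. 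The symmetric argument covers $\CQ_Y$. To conclude (3), I would invoke the fact that every equitable partition refines the coarsest one: each class of $(\CP_\infty, \CQ_\infty)$ is a union of classes of $(\CP_X, \CQ_Y)$, so balancedness of the latter is inherited by the former.

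The main obstacle I anticipate is the balancedness step in $(1) \Rightarrow (3)$. The fractional-isomorphism hypothesis gives only an edge-level non-degeneracy condition (one nonzero entry from $V^j$ to $V^{3-j}$), and one needs to promote this to a component-level statement about $\CP_X$. This promotion crucially exploits the coincidence of weak and strong connectivity for doubly stochastic matrices; without this fact, a single cross-edge between $V^j$ and $V^{3-j}$ would not suffice to place the two endpoints in the same strongly connected component, and the implication would break.
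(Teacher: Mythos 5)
Your proposal is correct and follows essentially the same route as the paper: both directions rest on Theorem~\ref{theo:aut}, on the identification of the classes of $\CP_X$ with the connected components of $G_X$ for doubly stochastic $X$, and on the observation that the coarsest equitable joint partition inherits balancedness from any finer balanced one. The only difference is the trivial reorganisation of the implication chain (you prove $(1)\Rightarrow(3)$ directly, whereas the paper proves $(1)\Rightarrow(2)$ and $(2)\Rightarrow(3)$ separately), with identical underlying arguments.
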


\begin{proof}
  The implication (3)$\implies$(2) is trivial.

  The converse implication (2)$\implies$(3) follows from the
  observation that if some equitable joint partition is balanced, then
  the coarsest equitable joint partition is balanced as well.
  
  To prove (1)$\implies$(2), suppose that $A^1$ and $A^2$ are fractionally isomorphic. Let
  $(X,Y)$ be a fractional isomorphism. Let $(\CP_X,\CQ_Y)$
  be the corresponding partition of $A^1\oplus A^2$. By
  Theorem~\ref{theo:aut}, this partition is equitable and hence an
  equitable joint partition of $A^1,A^2$. Recall that the parts of
  the partition $\CP_X$ are the connected components of the
  undirected graph $G_X$, because $X$ is doubly stochastic. As for every
  $j\in[2]$ and $v^j\in V^j$ there is a $v^{3-j}\in V^{3-j}$ with
  $X_{v^jv^{3-j}}\neq 0$, every connected component of $G_X$ has a
  nonempty intersection with both $V^1$ and $V^2$. Similarly, every
  part of $\CQ_Y$ has a nonempty intersection with $W^1$ and
  $W^2$. Thus $(\CP_X,\CQ_Y)$ is balanced.

  To prove that (2)$\implies$(1), let $(\CP,\CQ)$ be a balanced equitable
  joint partition of $(A^1,A^2)$. Then by Theorem~\ref{theo:aut}, the
  pair $(X_\CP,Y_\CQ)$ of matrices is a fractional
  automorphism of $A^1\oplus A^2$. As $(\CP,\CQ)$ is balanced, for
  every $j\in[2]$ and $v^j\in V^j$ there is a $v^{3-j}\in V^{3-j}$
  with $(X_{\CP})_{v^jv^{3-j}}\neq 0$, and for every $j\in[2]$
  and $w^j\in W^j$ there is a $w^{3-j}\in W^{3-j}$ with
  $(Y_{\CQ})_{w^jw^{3-j}}\neq 0$. Thus
  $(X_\CP,Y_\CQ)$ is a fractional isomorphism from
  $A^1$ to $A^2$.
\end{proof}

\begin{cor}
  Let $A^1,A^2,A^3,A^4$ be matrices such that $A^1\simeq A^2$ and
  $A^3\simeq A^4$. Then $A^1\oplus A^2\simeq A^3\oplus A^4$.
\end{cor}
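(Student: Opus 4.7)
The plan is to invoke the preceding theorem — fractional isomorphism is equivalent to the existence of a balanced equitable joint partition — and to glue together the two joint partitions guaranteed by the two hypotheses into one for the direct-sum pair in the conclusion.

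Concretely, I would let $(\CP_{12},\CQ_{12})$ denote the balanced equitable joint partition of $(A^1,A^2)$ coming from $A^1\simeq A^2$, and $(\CP_{34},\CQ_{34})$ the one coming from $A^3\simeq A^4$. Then I set $\CP:=\CP_{12}\cup\CP_{34}$, viewed as a partition of the disjoint union $V^1\cup V^2\cup V^3\cup V^4$, and $\CQ:=\CQ_{12}\cup\CQ_{34}$ analogously for the column index sets. This is the candidate joint partition for the direct-sum matrices appearing in the conclusion.

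I would check equitability of $(\CP,\CQ)$ for the full direct sum $A^1\oplus A^2\oplus A^3\oplus A^4$: this is essentially automatic, because the zero cross-blocks of a direct sum kill any mixed contributions. For $P\in\CP_{12}$ and $Q\in\CQ_{34}$ all the sums $F(v,Q)$ with $v\in P$ vanish; for $P\in\CP_{12}$ and $Q\in\CQ_{12}$ the required equality is exactly the equitability of $(\CP_{12},\CQ_{12})$ on $A^1\oplus A^2$; the other combinations are symmetric, and the column equitable condition is handled the same way.

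The balance condition for the pair in the conclusion is the one genuine subtlety. By hypothesis each $P\in\CP_{12}$ meets both $V^1$ and $V^2$, and each $P\in\CP_{34}$ meets both $V^3$ and $V^4$; so balance holds whenever the four row blocks are grouped into two sides that separate $V^1$ from $V^2$ and $V^3$ from $V^4$, e.g., for the pairing that places $V^1\cup V^3$ on one side and $V^2\cup V^4$ on the other. With the columns handled analogously, the preceding theorem in its $(2)\Rightarrow(1)$ direction then yields the required fractional isomorphism between the two direct-sum matrices. The main (and really the only) obstacle is this index-bookkeeping: the naive grouping with $V^1\cup V^2$ on one side and $V^3\cup V^4$ on the other leaves every class of the constructed $\CP$ inside a single side, so one must match the direct-sum pair in the conclusion with its blocks interleaved for the argument to carry through.
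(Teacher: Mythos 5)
Your proof is correct and is exactly the argument the paper leaves implicit: the corollary is stated without proof as an immediate consequence of the preceding theorem, and gluing the two balanced equitable joint partitions into one for the pair of direct sums is the intended route. You are also right that the claim only goes through with the summands interleaved --- the conclusion should read $A^1\oplus A^3\simeq A^2\oplus A^4$, matching the paper's earlier phrasing of closure under direct sums; as literally printed it is false (take $A^1=A^2=(1)$ and $A^3=A^4=(2)$: then $A^1\oplus A^2$ and $A^3\oplus A^4$ have different row sums and admit no balanced equitable joint partition).
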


\begin{lem}
  Let $(\CP,\CQ)$ be the coarsest equitable joint partition of
  $A^1,A^2$.  Then for $i=1,2$, the restriction of $(\CP,\CQ)$ to
  $A^i$ is the coarsest equitable partition of $A^i$.
\end{lem}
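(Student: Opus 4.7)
The plan is to prove the lemma by a two-way refinement argument, using the coarsest equitable partition of each $A^i$ in isolation as a comparison object. Write $\CP^i := \{P \cap V^i : P \in \CP,\ P \cap V^i \ne \emptyset\}$ and $\CQ^i := \{Q \cap W^i : Q \in \CQ,\ Q \cap W^i \ne \emptyset\}$ for the restrictions, and let $(\widehat{\CP}^i,\widehat{\CQ}^i)$ denote the coarsest equitable partition of $A^i$ (which exists by the general fact that the set of equitable partitions is closed under taking common coarsenings). The goal is to show $\CP^i = \widehat{\CP}^i$ and $\CQ^i = \widehat{\CQ}^i$.

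First, I would check that the restriction is equitable on $A^i$. The key observation is that $A^1 \oplus A^2$ has zero entries for $v \in V^i$, $w \in W^{3-i}$, so for any $v \in V^i$ and any class $Q \in \CQ$,
\[
F^{A^1 \oplus A^2}(v, Q) = F^{A^i}\bigl(v, Q \cap W^i\bigr).
\]
Equitability of $(\CP,\CQ)$ on $A^1 \oplus A^2$ therefore transfers directly: for $v,v'$ in the same class $P' = P \cap V^i \in \CP^i$ and any $Q' = Q \cap W^i \in \CQ^i$, one has $F^{A^i}(v,Q') = F^{A^i}(v',Q')$. The analogous argument in columns shows $(\CP^i,\CQ^i)$ is equitable on $A^i$; by coarsestness of $(\widehat{\CP}^i,\widehat{\CQ}^i)$, this yields that $(\CP^i,\CQ^i)$ refines $(\widehat{\CP}^i,\widehat{\CQ}^i)$.

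For the converse direction, I would lift the coarsest equitable partitions of $A^1$ and $A^2$ to a joint partition of $A^1 \oplus A^2$ by taking unions: set $\widehat{\CP} := \widehat{\CP}^1 \cup \widehat{\CP}^2$ and $\widehat{\CQ} := \widehat{\CQ}^1 \cup \widehat{\CQ}^2$. Using the block-diagonal structure once more, equitability check splits into two cases: for $P \in \widehat{\CP}^i$ and $Q \in \widehat{\CQ}^i$ the row sums reduce to those in $A^i$, which are constant on $P$; for $Q \in \widehat{\CQ}^{3-i}$ the sums are identically zero. So $(\widehat{\CP},\widehat{\CQ})$ is an equitable joint partition of $A^1,A^2$. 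Since $(\CP,\CQ)$ is the coarsest such partition, $(\widehat{\CP},\widehat{\CQ})$ refines $(\CP,\CQ)$; restricting to $V^i$ and $W^i$ shows that $\widehat{\CP}^i$ refines $\CP^i$ and $\widehat{\CQ}^i$ refines $\CQ^i$.

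Combining the two refinement relations yields equality: $\CP^i = \widehat{\CP}^i$ and $\CQ^i = \widehat{\CQ}^i$, which is exactly what the lemma asserts. The only substantive point is the block-diagonal bookkeeping in both directions; the rest is pure refinement-lattice reasoning, so I do not anticipate a real obstacle.
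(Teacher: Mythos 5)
Your proof is correct, but it takes a genuinely different route from the paper's. The paper argues by contradiction: assuming the restriction $(\CP^1,\CQ^1)$ is not coarsest, it takes a strictly coarser equitable partition $(\CR^1,\CS^1)$ of $A^1$ and merges classes of $(\CP,\CQ)$ accordingly (declaring $P\sim P'$ when $P\cap V^1$ and $P'\cap V^1$ lie in the same class of $\CR^1$), thereby manufacturing an equitable joint partition strictly coarser than $(\CP,\CQ)$ and contradicting coarsestness. You instead run a two-sided refinement comparison against the coarsest equitable partition $(\widehat{\CP}^i,\widehat{\CQ}^i)$ of $A^i$ taken in isolation: the restriction of $(\CP,\CQ)$ is equitable on $A^i$, hence refines $(\widehat{\CP}^i,\widehat{\CQ}^i)$; and the disjoint union $\widehat{\CP}^1\cup\widehat{\CP}^2$, $\widehat{\CQ}^1\cup\widehat{\CQ}^2$ is an equitable joint partition, hence refines $(\CP,\CQ)$, so $\widehat{\CP}^i$ refines $\CP^i$; mutual refinement gives equality. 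Both equitability checks in your argument are immediate from the block-diagonal structure of $A^1\oplus A^2$, whereas the equitability of the paper's merged partition $(\CR,\CS)$ is the genuinely delicate step of that proof (asserted there without detail), so your route is arguably cleaner and more self-contained; the paper's construction, on the other hand, exhibits explicitly how any coarsening of one component lifts to a coarsening of the joint partition. One cosmetic remark: you do not need the closure of equitable partitions under common coarsenings to justify the existence of $(\widehat{\CP}^i,\widehat{\CQ}^i)$ --- the paper already supplies it as the output $(\CP_\infty,\CQ_\infty)$ of colour refinement applied to $A^i$.
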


\begin{proof}
  Let $(\CP,\CQ)$ be the coarsest equitable joint partition of $A^1,A^2$.
  Let $(\CP^i,\CQ^i)$ be the restriction of $(\CP,\CQ)$ to $A^i$, that is, 
  $\CP^i=\{ P\cap V^i \mid P\in \CP\}$ and
  $\CQ^i=\{ Q\cap W^i \mid Q\in \CQ\}$.
  We know that $(\CP^i,\CQ^i)$ is an equitable partition of $A^i$.

  Assume, for the purpose of contradiction, the partition
  $(\CP^1,\CQ^1)$ is not the coarsest one on $A^1$.  (The case for
  $A^2$ is symmetric.)  Let $(\CR^1,\CS^1)$ be an equitable partition
  of $A^1$ that is strictly coarser than $(\CP^1,\CQ^1)$.
  
  For classes $P,P'\in\CP$ we write $P\sim P'$ if $P\cap V^1$ and $P'\cap V^1$ are subsets of the same class in $\CR^1$.
  Similarly, for classes $Q,Q'\in\CQ$ we write $Q\sim Q'$ if $Q\cap W^1$ and $Q'\cap W^1$ are subsets of the same class in $\CS^1$.
  For $v\in V^1\cup V^2$, let $[v]$ denote the class of $v$ in $\CP$.
  Similarly, for $w\in W^1\cup W^2$, let $[w]$ denote the class of $w$ in $\CQ$.
  Consider the joint partition $(\CR,\CS)$ of $A^1, A^2$, 
    where $\CR$ is induced by the equivalence relation $\{(v,v') \mid [v]\sim [v'] \}$
    and $\CS$ is induced by the equivalence relation $\{(w,w') \mid [w]\sim [w'] \}$.
  The partition $(\CR,\CS)$ is equitable and coarser than $(\CP,\CQ)$, which is a contradiction.
\end{proof}

From the above Lemma, it follows immediately that fractional
isomorphism is transitive. As it is trivially reflexive and symmetric,
we obtain the following corollary.

\begin{cor}\label{cor:fieq}
  Fractional isomorphism is an equivalence relation.
\end{cor}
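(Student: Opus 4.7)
I would verify the three axioms of an equivalence relation separately; reflexivity and symmetry are immediate from the definition, and the preceding Lemma packages all of the substance into transitivity. For reflexivity, given $A \in \RR^{V \times W}$, the pair of permutation matrices that swap the two copies of $V$ (respectively $W$) in $A \oplus A$ is doubly stochastic, commutes with $A \oplus A$ in the required sense, and has every off-block entry nonzero. For symmetry, the defining conditions for a fractional isomorphism between $A^1$ and $A^2$ are already symmetric in the two arguments, so any witness $(X, Y)$ for $A^1 \simeq A^2$ is simultaneously a witness for $A^2 \simeq A^1$.

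For transitivity, suppose $A^1 \simeq A^2$ and $A^2 \simeq A^3$, with the $V^i, W^i$ chosen mutually disjoint. The plan is to form the triple sum $\tilde A := A^1 \oplus A^2 \oplus A^3$ and work with its coarsest equitable partition $(\tilde\CP, \tilde\CQ)$. Viewing $\tilde A$ as a direct sum in two different ways, namely $(A^1 \oplus A^2) \oplus A^3$ and $A^1 \oplus (A^2 \oplus A^3)$, applying the Lemma to each grouping shows that the restriction of $(\tilde\CP, \tilde\CQ)$ to $A^1 \oplus A^2$ is the coarsest equitable joint partition of $A^1, A^2$, and likewise for $A^2 \oplus A^3$. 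By hypothesis and the previous Theorem, both of these restricted joint partitions are balanced.

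Next I would deduce that every class $P \in \tilde\CP$ that meets any one of $V^1, V^2, V^3$ meets all three: if $P \cap V^1 \neq \emptyset$, then $P \cap (V^1 \cup V^2)$ is a nonempty class of the balanced coarsest equitable joint partition of $A^1, A^2$, forcing $P \cap V^2 \neq \emptyset$, and the $A^2, A^3$ side then forces $P \cap V^3 \neq \emptyset$. The analogous statement holds on the column side. Restricting $(\tilde\CP, \tilde\CQ)$ to $A^1 \oplus A^3$ therefore yields a joint partition that is balanced between $V^1, V^3$ and between $W^1, W^3$. Equitability is preserved under this restriction because the block-diagonal form of $\tilde A$ makes every entry $\tilde A_{vw}$ with $v \in V^1 \cup V^3$ and $w \in W^2$ (and symmetrically on rows) vanish, so restricted row and column sums coincide with the corresponding sums in $\tilde A$, which are constant within the classes of $\tilde\CP$ and $\tilde\CQ$. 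Invoking the Theorem one last time, the existence of this balanced equitable joint partition of $A^1, A^3$ yields $A^1 \simeq A^3$.

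The main obstacle, conceptually, is to resist constructing the bijection between $A^1$ and $A^3$ by hand --- composing bijections between the coarsest equitable partitions of $A^1, A^2, A^3$ and verifying cross-equitability by a double bridge through $A^2$. The triple-sum construction compresses all of that into a single application of the Lemma, and the only genuine technical point is the equitability-under-restriction claim, which falls out directly from the block-diagonal form of $\tilde A$.
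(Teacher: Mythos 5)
Your proof is correct and follows the paper's route: reflexivity and symmetry are immediate from the definition, and transitivity is derived from the restriction lemma (Lemma 5.6) together with the characterisation of fractional isomorphism via balanced coarsest equitable joint partitions. The paper merely asserts that transitivity ``follows immediately'' from that lemma; your triple-direct-sum construction, with the check that equitability and balancedness survive restriction to $A^1\oplus A^3$, is a correct and fully detailed way of carrying out exactly that step.
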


\begin{lem}\label{lem:equi1}
  For $j=1,2$, let $A^j\in\RR^{V^j\times W^j}$ such that
  $A^1$ is connected. Let $(\CP,\CQ)$ be a balanced equitable joint
  partition of $A^1$ and $A^2$. Then for all $P\in\CP,Q\in\CQ$,
  \[
  \frac{|V^1|}{|V^2|}=\frac{|P\cap V^1|}{|P\cap V^2|}=\frac{|Q\cap
    W^1|}{|Q\cap W^2|}=\frac{|W^1|}{|W^2|}.
  \]
\end{lem}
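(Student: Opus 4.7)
The plan is to extract a local ratio identity at each pair of classes and propagate it using connectedness of $B_{A^1}$. Write $p_j(P) := |P \cap V^j|$, $q_j(Q) := |Q \cap W^j|$ (all positive by the balance hypothesis), $r_P := p_1(P)/p_2(P)$, and $s_Q := q_1(Q)/q_2(Q)$. The goal is to show that all $r_P$ and $s_Q$ coincide in a single constant $c = |V^1|/|V^2| = |W^1|/|W^2|$.

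First, I would unpack the equitable condition. For $v \in P$, the sum $\sum_{w \in Q}(A^1\oplus A^2)_{vw}$ takes a common value $\alpha(P,Q)$; for $v \in P \cap V^j$ this sum is $F^{A^j}(v, Q \cap W^j)$, so $\alpha(P,Q)$ simultaneously captures data from $A^1$ and $A^2$. Summing over $v \in P \cap V^j$, and running the symmetric column argument to obtain a common column-block constant $\beta(P,Q)$, one gets
\[
F^{A^j}(P \cap V^j, Q \cap W^j) = p_j(P)\,\alpha(P,Q) = q_j(Q)\,\beta(P,Q), \quad j = 1, 2.
\]
Equating these for $j=1$ and $j=2$ and cross-multiplying yields $\alpha(P,Q)\,(p_1 q_2 - p_2 q_1) = 0$, hence the \emph{local identity} $\alpha(P,Q) \neq 0 \Longrightarrow r_P = s_Q$.

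To propagate, form the bipartite graph $H$ on $\CP \cup \CQ$ whose edges are the pairs $P - Q$ with $\alpha(P,Q) \neq 0$. If $H$ is connected, the local identity forces a single constant $c$ with $r_P = s_Q = c$ for all classes, and summing $p_1(P) = c\,p_2(P)$ over $P \in \CP$ gives $c = |V^1|/|V^2|$, with $c = |W^1|/|W^2|$ by the analogous argument; this delivers the full chain of equalities. The hard part will be showing that $H$ is connected from the hypothesis that $B_{A^1}$ is connected. The restriction $(\{P \cap V^1 : P \in \CP\}, \{Q \cap W^1 : Q \in \CQ\})$ is itself an equitable partition of $A^1$, and $H$ is essentially the bipartite quotient of $B_{A^1}$ modulo this partition \emph{with edges suppressed when row-sum cancellations make $\alpha(P,Q) = 0$}. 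To bridge this gap I plan to apply Lemma~\ref{lem:invisible_hand} on the digraph obtained by bidirectionally orienting $B_{A^1}$, which is strongly connected precisely because $B_{A^1}$ is, using the labelling $f(v) := r_{P_v}$ for $v \in V^1$ and $f(w) := s_{Q_w}$ for $w \in W^1$; the equitable row- and column-sum identities should yield the positive convex combination required by the lemma (possibly after reweighting by $|A^1_{vw}|$ to accommodate signed entries), forcing $f$ constant and producing the common value $c$.
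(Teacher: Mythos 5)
Your first two paragraphs reproduce the paper's own argument almost exactly: the paper likewise derives $|P\cap V^j|\cdot f_{PQ}=F^{A^j}(P\cap V^j,Q\cap W^j)=|Q\cap W^j|\cdot g_{PQ}$ for $j=1,2$ (your $\alpha,\beta$ are its $f_{PQ},g_{PQ}$), deduces the local identity $r_P=s_Q$ whenever $f_{PQ}\neq 0$, propagates it over the bipartite graph $\CB$ on $\CP\cup\CQ$ with edges $\{PQ: f_{PQ}\neq0\}$ (your $H$), and sums over classes to obtain the global ratios. Up to that point your proposal is correct and is the intended proof.

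The gap is in your final step. The paper simply asserts that $\CB$ is connected because $B_{A^1}$ is; you rightly sense that this needs justification when entries may have mixed signs, since a block can contain nonzero entries of $A^1$ while $\alpha(P,Q)=0$. But the proposed repair via Lemma~\ref{lem:invisible_hand} does not go through: that lemma needs $f(v)=r_{P_v}$ to be a \emph{positive convex combination} of $s_{Q_w}$ over \emph{all} neighbours $w$ of $v$ in $B_{A^1}$, and no such identity is available --- the equitable conditions only tie $r_P$ to $s_Q$ on blocks with $\alpha(P,Q)\neq0$, and reweighting by $|A^1_{vw}|$ does not turn the signed row-sum identities into convex weights. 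In fact no repair at this level of generality is possible. Take $A^1=\left(\begin{smallmatrix}1&1&1&-1\\1&1&-1&1\\0&0&1&1\\0&0&1&1\end{smallmatrix}\right)$ with row classes $\{v_1,v_2\},\{v_3,v_4\}$ and column classes $\{w_1,w_2\},\{w_3,w_4\}$, and $A^2=\left(\begin{smallmatrix}2&0&0\\0&1&1\\0&1&1\end{smallmatrix}\right)$ with classes $\{v'_1\},\{v'_2,v'_3\}$ and $\{w'_1\},\{w'_2,w'_3\}$, merged blockwise into a joint partition: one checks directly that this joint partition is balanced and equitable and that $B_{A^1}$ is connected, yet $|P\cap V^1|/|P\cap V^2|$ equals $2$ on one class and $1$ on the other, so $H$ genuinely is disconnected and the asserted conclusion fails. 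The connectivity claim (and hence the lemma) is sound for nonnegative matrices, where $\alpha(P,Q)=0$ forces the entire block to vanish and $H$ is the honest quotient of $B_{A^1}$; under that hypothesis your argument closes without invoking Lemma~\ref{lem:invisible_hand} at all.
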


\begin{proof}
  As usual, for $j=1,2$ we let $F^j=F^{A^j}$.  Furthermore, for $P\in\CP$, and $Q\in\CQ$ we let
  $P^j:=P\cap V^j$ and $Q^j:=Q\cap W^j$.
  For all $P\in\CP,Q\in\CQ$, let $f_{PQ}:=F^j(v^j,Q^j)$ for some (and hence all) $j\in[2]$, $v^j\in P^j$ and
  $g_{PQ}:=F^j(P^j,w^j)$ for some (and hence all) $j\in[2]$,
  $w^j\in Q^j$. Then for $j=1,2$ we have
  \[
  |P^j|\cdot f_{PQ}=F^j(P^j,Q^j)=|Q^j|\cdot g_{PQ}.
  \]
  Since the joint partition $(\CP,\CQ)$ is balanced, we have
  $|P^j|,|Q^j|\neq 0$, and thus $f_{PQ}= 0\iff g_{PQ}=0$. Furthermore,
  if $f_{PQ}\neq 0$ then 
  \begin{equation}\label{eq:i2}
  \frac{|P^1|}{|Q^1|}=\frac{|P^2|}{|Q^2|}=\frac{g_{PQ}}{f_{PQ}}.
  \end{equation}
  Let $\CB$ be the bipartite graph with vertex set $\CP\cup\CQ$
  and (undirected) edges $PQ$ for all $P\in\CP,Q\in\CQ$ such that
  $f_{PQ}\neq0$. As the bipartite graph $B_{A^1}$ is connected, the graph $\CB$ is
  connected as well. 

  This implies that for all $P\in\CP,Q\in\CQ$ we have 
  \begin{equation}\label{eq:i3}
  \frac{|P^1|}{|P^2|}=\frac{|Q^1|}{|Q^2|}.
  \end{equation}
  As $\CB$ is connected, it suffices to prove this for $PQ\in E(\CB)$,
  and for such $PQ$ it follows immediately from 
  \eqref{eq:i2}.

  As $\sum_{P\in\CP}|P^j|=|V^j|$, this implies for all $P\in\CP$,
  \begin{equation}
    \frac{|P^1|}{|P^2|}=\frac{|V^1|}{|V^2|}.
  \end{equation}
  Similarly, for all $Q\in\CQ$,
  \begin{equation}
    \frac{|Q^1|}{|Q^2|}=\frac{|W^1|}{|W^2|}.
  \end{equation}
\end{proof}

\begin{lem}\label{lem:equi2}
  Let $A^1\in\RR^{V^1\times W^1}$ and $A^2\in\RR^{V^2\times W^2}$ be
  nonzero matrices such that there exists
  a pair
  $(X,Y)\in\RR^{V^2\times V^1}\times\RR^{W^2\times W^1}$ of
  stochastic matrices with constant column sums satisfying
  \eqref{eq:i1a} and \eqref{eq:i1b}. Then
  \[
  \frac{|V^1|}{|V^2|}=\frac{|W^1|}{|W^2|}.
  \]
\end{lem}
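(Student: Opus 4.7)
The plan is to reduce the claim to a short identity by summing the entries of both \eqref{eq:i1a} and \eqref{eq:i1b} against the appropriate all-ones vectors, exploiting the four sum-identities that the hypotheses on $X$ and $Y$ supply. Set $c := |V^2|/|V^1|$ and $d := |W^2|/|W^1|$. Stochasticity together with constant column sums gives $X\mathbf{1}_{V^1} = \mathbf{1}_{V^2}$ and $\mathbf{1}_{V^2}^{\tr}X = c\,\mathbf{1}_{V^1}^{\tr}$, and analogously $Y\mathbf{1}_{W^1} = \mathbf{1}_{W^2}$ and $\mathbf{1}_{W^2}^{\tr}Y = d\,\mathbf{1}_{W^1}^{\tr}$.

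The first step is to sandwich \eqref{eq:i1a} between $\mathbf{1}_{V^2}^{\tr}$ on the left and $\mathbf{1}_{W^1}$ on the right. Using $\mathbf{1}_{V^2}^{\tr}X = c\,\mathbf{1}_{V^1}^{\tr}$, the left-hand side collapses to $c\,S(A^1)$, where $S(\cdot)$ denotes the sum of all entries of a matrix; using $Y\mathbf{1}_{W^1} = \mathbf{1}_{W^2}$, the right-hand side collapses to $S(A^2)$. Hence $c\,S(A^1) = S(A^2)$. Applying the analogous procedure to \eqref{eq:i1b}, sandwiched now between $\mathbf{1}_{V^1}^{\tr}$ and $\mathbf{1}_{W^2}$, the row- and column-sum identities swap roles ($\mathbf{1}_{V^1}^{\tr}X^{\tr} = \mathbf{1}_{V^2}^{\tr}$ on the left, $Y^{\tr}\mathbf{1}_{W^2} = d\,\mathbf{1}_{W^1}$ on the right), yielding $S(A^2) = d\,S(A^1)$. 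Combining, $(c-d)\,S(A^1) = 0$; whenever $S(A^1) \neq 0$, we immediately get $c = d$, which is exactly the desired $|V^1|/|V^2| = |W^1|/|W^2|$.

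The main obstacle is the degenerate case $S(A^1) = 0$ (which also forces $S(A^2) = 0$); the hypothesis $A^1 \neq 0$ alone does not exclude this when $A^1$ has mixed signs. To handle it I would compose the two equations to obtain the intertwinings $X^{\tr}X\cdot A^1 = A^1\cdot Y^{\tr}Y$ (multiply \eqref{eq:i1a} on the left by $X^{\tr}$ and apply \eqref{eq:i1b}) and symmetrically $XX^{\tr}\cdot A^2 = A^2\cdot YY^{\tr}$. Since $X^{\tr}X$ is PSD with spectral radius exactly $c$ (top eigenvector $\mathbf{1}_{V^1}$), while $Y^{\tr}Y$ is PSD with spectral radius $d$ (top eigenvector $\mathbf{1}_{W^1}$), and since $A^1(A^1)^{\tr}$ commutes with $X^{\tr}X$ while $(A^1)^{\tr}A^1$ commutes with $Y^{\tr}Y$, a simultaneous eigendecomposition aligns the nonzero singular values of $A^1$ with common eigenvalues of the two PSD operators. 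Together with the analogous spectral structure forced on $A^2$ by the second intertwining and a Perron-type comparison, this should force $c = d$ in the degenerate case as well; the spectral bookkeeping to make that argument airtight is the delicate step I expect to require the most care.
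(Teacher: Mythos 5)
Your computation for the generic case is correct, and it is actually a cleaner route than the paper's own argument: sandwiching \eqref{eq:i1a} and \eqref{eq:i1b} between the all-ones vectors does give $c\,S(A^1)=S(A^2)$ and $S(A^2)=d\,S(A^1)$, hence $c=d$ whenever $S(A^1)\neq 0$. You are also right that $S(A^1)=0$ is the real obstacle, and right that your spectral sketch is the delicate part. For comparison, the paper proceeds differently: it iterates the intertwining to get $(d/c)^n C^n M=MD^n$ with $C,D$ doubly stochastic and $M:=XA^1=A^2Y$, and derives a contradiction from $M$ being nonzero --- but it never justifies $M\neq 0$, so it stumbles on exactly the same degenerate case that you isolated.

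The gap in your proposal is that the degenerate case cannot be closed, because the lemma as stated is false there. Take $|V^1|=|W^1|=|V^2|=2$, $|W^2|=3$ and
\[
A^1=\begin{pmatrix}1&-1\\-1&1\end{pmatrix},\quad
A^2=\begin{pmatrix}1&-1&0\\-1&1&0\end{pmatrix},\quad
X=\tfrac12\begin{pmatrix}1&1\\1&1\end{pmatrix},\quad
Y=\tfrac12\begin{pmatrix}1&1\\1&1\\1&1\end{pmatrix}.
\]
Then $X,Y$ are stochastic with constant column sums ($1$ and $3/2$ respectively), both matrices are nonzero, and $XA^1=A^2Y=0$ and $X^{\tr}A^2=A^1Y^{\tr}=0$, so \eqref{eq:i1a} and \eqref{eq:i1b} hold trivially; yet $|V^1|/|V^2|=1\neq 2/3=|W^1|/|W^2|$. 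Your intertwining $X^{\tr}XA^1=A^1Y^{\tr}Y$ only forces $X^{\tr}X$ and $Y^{\tr}Y$ to share \emph{some} eigenvalue (any $\lambda$ whose $Y^{\tr}Y$-eigenspace is not annihilated by $A^1$), not their spectral radii $c$ and $d$; in the example the shared eigenvalue is $0$, and no amount of spectral bookkeeping will promote it to the top. What your first computation genuinely establishes --- and what can be pushed slightly further by applying the same sandwiching with $A^1\mathbf 1$ and $\mathbf 1^{\tr}A^1$ in place of $S(A^1)$, giving $X^{\tr}X(A^1\mathbf 1)=d\,(A^1\mathbf 1)$ and $Y^{\tr}Y(A^{1\tr}\mathbf 1)=c\,(A^{1\tr}\mathbf 1)$, hence $d\le c$ and $c\le d$ --- is that the conclusion holds whenever the row-sum and column-sum vectors of $A^1$ are both nonzero (in particular for nonnegative nonzero $A^1$, which covers the $0$--$1$ setting of fractional graph isomorphism). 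An extra hypothesis of this kind is needed; under the hypotheses as literally stated, no proof exists.
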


\begin{proof}
  Suppose for contradiction that
  $c:=\frac{|V^1|}{|V^2|}\neq\frac{|W^1|}{|W^2|}=:d$. Let us assume
  that $c>d$, the proof for the case $c<d$ is similar.

  Observe that \eqref{eq:i1a} and \eqref{eq:i1b} imply
  \[
  (XX^{\tr})^nXA^1=A^2Y(Y^{\tr}Y)^n
  \]
  for all $n\ge 0$. An easy calculation shows that the matrix
  $(XX^{\tr})$ is a matrix with constant row an column sums $1/c$, and
  the matrix $(YY^{\tr})$ is a matrix with constant row an column sums
  $1/d$. Let $C:=c(XX^{\tr})$ and $D:=d(YY^{\tr})$ and
  $M:=XA^1=A^2Y$. Then for all $n\ge 0$ we have
  \[
  \left(\frac{d}{c}\right)^n C^n M=MD^n.
  \]
  As both $C$ and $D$ are doubly stochastic and $M$ is nonzero and
  $\frac{d}{c}<1$, this leads to a contradiction.
\end{proof}

\begin{theo}\label{theo:equi}
  For $j=1,2$, let $A^j\in\RR^{V^j\times W^j}$. Suppose that
  $A^1$ is connected. Then $A^1,A^2$ are
  fractionally isomorphic if and only if there is a pair
  $(X,Y)\in\RR^{V^2\times V^1}\times\RR^{W^2\times W^1}$ of 
  stochastic matrices with constant column sums satisfying \eqref{eq:i1a} and \eqref{eq:i1b}.
\end{theo}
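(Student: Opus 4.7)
The plan is to derive both directions from Theorem~\ref{theo:aut} and the characterisation of fractional isomorphism by balanced equitable joint partitions established earlier in this section.

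For the forward direction, I would start from a balanced equitable joint partition $(\CP,\CQ)$ of $A^1,A^2$ and define $X\in\RR^{V^2\times V^1}$ by $X_{v^2v^1}:=1/|P\cap V^1|$ when $v^1$ and $v^2$ lie in a common part $P\in\CP$, and $0$ otherwise; $Y$ is defined analogously from $\CQ$. Row-stochasticity of $X$ is immediate. For the constant column sum I would invoke Lemma~\ref{lem:equi1} — this is where the connectedness of $A^1$ enters — to conclude that $|P\cap V^2|/|P\cap V^1|=|V^2|/|V^1|$ is independent of $P$, so every column of $X$ sums to $|V^2|/|V^1|$. Verifying \eqref{eq:i1a} is then a block computation: for $v^2\in P\cap V^2$ and $w^1\in Q\cap W^1$, both $(XA^1)_{v^2w^1}$ and $(A^2Y)_{v^2w^1}$ reduce, after using the equitability of $(\CP,\CQ)$ for $A^1\oplus A^2$ (which in particular gives $F^1(v^1,Q\cap W^1)=F^2(v^2,Q\cap W^2)$ and $F^1(P\cap V^1,w^1)=F^2(P\cap V^2,w^2)$) together with the double-counting identity $|P\cap V^1|\,F^1(v^1,Q\cap W^1)=|Q\cap W^1|\,F^1(P\cap V^1,w^1)$, to the common value $F^1(P\cap V^1,w^1)/|P\cap V^1|$; \eqref{eq:i1b} is symmetric.

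For the backward direction, I would assume $A^1,A^2$ are both nonzero (the degenerate case in which $A^2=0$, given that $A^1$ is connected, is easily handled separately). Then Lemma~\ref{lem:equi2} supplies $c:=|V^1|/|V^2|=|W^1|/|W^2|$, and I will extend $(X,Y)$ to a fractional automorphism of $A^1\oplus A^2$ by placing $X,X^\tr$ (and $Y,Y^\tr$) into the off-diagonal blocks, with identity diagonal blocks of appropriate weight:
\[
\tilde X:=\begin{pmatrix}(1-\tfrac{b}{c})I_{V^1} & bX^\tr \\[2pt] bX & (1-b)I_{V^2}\end{pmatrix},\qquad \tilde Y:=\begin{pmatrix}(1-\tfrac{b}{c})I_{W^1} & bY^\tr \\[2pt] bY & (1-b)I_{W^2}\end{pmatrix},
\]
for any fixed $b\in(0,\min\{1,c\}]$. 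A short computation using that $X$ (and $Y$) has row sums $1$ and column sums $1/c$ shows that $\tilde X,\tilde Y$ are nonnegative and doubly stochastic; expanding $\tilde X(A^1\oplus A^2)$ and $(A^1\oplus A^2)\tilde Y$ block by block reduces the fractional automorphism equation to \eqref{eq:i1a} (in the $V^2\times W^1$ block) and \eqref{eq:i1b} (in the $V^1\times W^2$ block), while the diagonal blocks cancel because the scalars in $\tilde X$ and $\tilde Y$ are the same. Since $b>0$ and every row and every column of $X,Y$ contain a nonzero entry, every vertex of one summand has a nonzero partner in the other, which is precisely the balanced condition in the definition of fractional isomorphism; hence $(\tilde X,\tilde Y)$ is a fractional isomorphism from $A^1$ to $A^2$.

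The main obstacle is the backward direction: one must turn the non-square stochastic maps $(X,Y)$ into doubly stochastic maps on the direct sum while preserving compatibility with the block structure of $A^1\oplus A^2$. The diagonal corrections in $\tilde X$ and $\tilde Y$ must use the \emph{same} scaling factor on the $V$- and $W$-sides, which is possible precisely because Lemma~\ref{lem:equi2} forces $|V^1|/|V^2|$ and $|W^1|/|W^2|$ to coincide; without this equality, the doubly-stochastic extension and the fractional-automorphism equation would be incompatible.
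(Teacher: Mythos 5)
Your proof is correct and follows essentially the same route as the paper: the forward direction builds the same block-constant matrices from a balanced equitable joint partition and uses Lemma~\ref{lem:equi1} for the constant column sums, and the backward direction uses Lemma~\ref{lem:equi2} to equate $|V^1|/|V^2|$ with $|W^1|/|W^2|$ before embedding $(X,Y)$ into a doubly stochastic fractional automorphism of $A^1\oplus A^2$. Your one-parameter family $\tilde X,\tilde Y$ with $b\in(0,\min\{1,c\}]$ is a mild streamlining that recovers the paper's two cases ($c\ge 1$ via $b=1$ and $c<1$ via $b=c$) as special instances, but it is not a genuinely different argument.
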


\begin{proof}
  To prove the forward direction, let
  $(\CP,\CQ)$ be a balanced equitable joint partition of
  $A^1,A^2$. For all $j\in[2]$, $P\in\CP$, and $Q\in\CQ$, we let
  $P^j:=P\cap V^j$ and $Q^j:=Q\cap W^j$. By Lemma~\ref{lem:equi1}, we
  have
    \[
  \frac{|V^1|}{|V^2|}=\frac{|P^1|}{|P^2|}=\frac{|Q^1|}{|Q2|}=\frac{|W^1|}{|W^2|}
  \]
  For all $P\in\CP,Q\in\CQ$, let $f_{PQ}:=F^j(v^j,Q^j)$ for some (and hence all) $j\in[2]$, $v^j\in P^j$ and
  $g_{PQ}:=F^j(P^j,w^j)$ for some (and hence all) $j\in[2]$,
  $w^j\in Q^j$. Then for $j=1,2$ we have
  \begin{equation}\label{eq:i5}
  |P^j|\cdot f_{PQ}=F^j(P^j,Q^j)=|Q^j|\cdot g_{PQ}.
  \end{equation}
 We define $X\in\RR^{V^2\times V^1}$ by $X_{v^2v^1}:=1/|P^1|$ if
  $v^2,v^1\in P$ for some $P\in\CP$ and $X_{v^2v^1}:=0$ otherwise. Similarly, we
  define $Y\in\RR^{W^2\times W^1}$ by $Y_{w^2w^1}:=1/|Q^1|$ if
  $w^2,w^1\in Q$ for some $Q\in\CQ$ and $Y_{w^2w^1}:=0$ otherwise. 
   Then for all $v^2\in V^2$,
   \[
   \sum_{v^1\in V^1}X_{v^2v^1}=\sum_{v^1\in P^1}\frac{1}{|P^1|}=1,
   \]
   where $P\in\CP$ such that $v^2\in P$.
   For all $v^1\in V^1$,
   \[
   \sum_{v^2\in V^2}X_{v^2v^1}=\sum_{v^2\in
     P^2}\frac{1}{|P^1|}=\frac{|P^2|}{|P^1|}=\frac{|V^2|}{|V^1|},
   \]
   where $P\in\CP$ such that $v^1\in P$.
   Thus $X$, and similarly $Y$, are stochastic matrices with constant column
   sums. We claim that
  $(X,Y)$ satisfies the equations \eqref{eq:i1a} and \eqref{eq:i1b}. To
  prove \eqref{eq:i1a}, let $v^2\in V^2$ and $w^1\in W^1$, and let
  $P\in\CP,Q\in\CQ$ such that $v^2\in P^2$ and $w^1\in Q^1$. Then
  \[
  (XA^1)_{v^2w^1}=\sum_{v^1\in V^1}X_{v^2v^1}A^1_{v^1w^1}=\sum_{v^1\in
    P^1}\frac{1}{|P^1|}A^1_{v^1w^1}=\frac{F^1(P^1,w^1)}{|P^1|}=\frac{g_{PQ}}{|P^1|}=\frac{f_{PQ}}{|Q^1|}=(A^2Y)_{v^2w^1},
  \]
  where $\frac{g_{PQ}}{|P^1|}=\frac{f_{PQ}}{|Q^1|}$ follows from
  \eqref{eq:i5}. Equation \eqref{eq:i1b} can be proved similarly. 

  To prove the backward direction, let $(X,Y)$ be a pair of
  stochastic matrices with constant column sums satisfying
  \eqref{eq:i1a} and \eqref{eq:i1b}. By Lemma~\ref{lem:equi2}, we have
  $|V^1|/|V^2|=|W^1|/|W^2|=:c$. Then the column sums of both $X$ and $Y$
  are $1/c$. 

  \begin{cs}
    \case1
    $c\ge1$.\\
    Let $C$ be the $V^1\times V^1$ diagonal
  matrix with all diagonal entries $1-1/c$, and let $D$ be the $W^1\times W^1$ diagonal
  matrix with all diagonal entries $1-1/c$. Then 
  \[
  \left(\;
    \begin{pmatrix}
      C&X^{\tr}\\X&0
    \end{pmatrix},\;
    \begin{pmatrix}
      D&Y^{\tr}\\Y&0
    \end{pmatrix}\;
  \right)
  \]
  is a fractional automorphism from $A^1$ to $A^2$.
  \case2
  $c<1$.\\
  In this case, let $C$ be the $V^2\times V^2$ diagonal
  matrix with all diagonal entries $1-c$, and let $D$ be the $W^2\times W^2$ diagonal
  matrix with all diagonal entries $1-c$. Then 
  \[
  \left(\;
    \begin{pmatrix}
      0&cX^{\tr}\\cX&C
    \end{pmatrix},\;
    \begin{pmatrix}
      0&cY^{\tr}\\cY&D
    \end{pmatrix}\;
  \right)
  \]
  is a fractional automorphism from $A^1$ to $A^2$.
  \qedhere
  \end{cs}
\end{proof}

The following example shows that the condition in
Theorem~\ref{theo:equi}, that $A^1$ (or equivalently $A^2$) be
connected is necessary, even for $0$-$1$-matrices.

  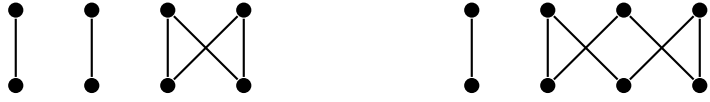
\begin{figure}
    \centering
    \begin{tikzpicture}
      [
      every node/.style={circle, fill, minimum height=2mm,inner sep=0pt},
      thick
      ]
      \begin{scope}
        \path (0,0) node (v1) {} (1,0) node (v2) {} (2,0) node (v3) {}
        (3,0) node (v4) {}
        (0,1) node (w1) {} (1,1) node (w2) {} (2,1) node (w3) {} (3,1) node (w4) {}
        ;
        \draw (v1) edge (w1) (v2) edge (w2) (v3) edge (w3) edge (w4)
        (v4) edge (w3) edge (w4);
        \end{scope}
        
        \begin{scope}[xshift=6cm]
        \path (0,0) node (v1) {} (1,0) node (v2) {} (2,0) node (v3) {}
        (3,0) node (v4) {}
        (0,1) node (w1) {} (1,1) node (w2) {} (2,1) node (w3) {} (3,1) node (w4) {}
        ;
        \draw (v1) edge (w1) (v2) edge (w2) edge (w3) (v3) edge (w2) edge (w4)
        (v4) edge (w3) edge (w4);
        \end{scope}
    \end{tikzpicture}
    \caption{Bipartite graphs of the matrices in Example~\ref{exa:equi}}
    \label{fig:equi}
  \end{figure}

\begin{exa}\label{exa:equi}
  The matrices
  \[
  A^1=
  \begin{pmatrix}
    1&0&0&0\\
    0&1&0&0\\
    0&0&1&1\\
    0&0&1&1
  \end{pmatrix},
  \qquad
  A^2=
  \begin{pmatrix}
    1&0&0&0\\
    0&1&1&0\\
    0&1&0&1\\
    0&0&1&1
  \end{pmatrix},
  \]
  whose bipartite graphs are displayed in Figure~\ref{fig:equi}, are
  easily seen to be fractionally isomorphic, but there is no pair
  $(X,Y)$ of doubly stochastic matrices satisfying \eqref{eq:i1a} and \eqref{eq:i1b}. We
  leave it to the reader to prove these claims.
\end{exa}

\section{Factor Matrices and Partition Equivalence}

For our applications, fractional isomorphism is an equivalence
relation that is still too fine. In this section, we introduce a
coarser equivalence relation that we will call \emph{partition equivalence}.
For the applications in the next section, it
will be helpful to develop partition equivalence for matrices with
entries from $\ORR=\RR\cup\{\infty\}$. 

\subsection{Partition Matrices and Factor Matrices}
A \emph{partition
  matrix} is a $0$-$1$ matrix that has exactly one $1$-entry in each
row and at least one $1$-entry in each column. We
usually denote partition matrices by $\Pi$ or $C,D$. With each partition
matrix $\Pi\subseteq\{0,1\}^{V\times T}$ we associate a partition
$\{P_t\mid t\in T\}$ of $V$ into parts $P_t=\{v\in V\mid
\Pi_{vt}=1\}$. Conversely, with every partition $\CP$ of $V$ we
associate the partition matrix $\Pi_{\CP}\in\{0,1\}^{V\times\CP}$
defined by $(\Pi_\CP)_{vP}=1\iff v\in P$, for all $v\in V$ and $P\in\CP$. 

Note that partition matrices are stochastic, but, in general, not
doubly stochastic. (The only doubly stochastic partition matrices are
the permutation matrices.) For every partition matrix
$\Pi\in\RR^{V\times T}$, we define its \emph{scaled transpose} to be the
matrix $\Pi^{\str}\in\RR^{T\times V}$
with entries
\[
\Pi^{\str}_{tv}:=\frac{\Pi_{vt}}{\sum_{v'\in V}\Pi_{v't}}.
\]
Then $\Pi^{\str}$ is the transpose of $\Pi$ scaled to a stochastic matrix. 
Observe that the matrix $\Pi\Pi^{\str}\in\RR^{V\times V}$ is symmetric and doubly stochastic. Indeed, if
$\Pi=\Pi_{\CP}$ for a partition $\CP$ of $V$, then 
\begin{equation}\label{eq:pe0a}
(\Pi\Pi^{\str})_{vv'}=
\begin{cases}
  1/|P|&\text{if }v,v'\in P\text{ for some }P\in\CP,\\
  0&\text{otherwise}.
\end{cases}
\end{equation}
This is precisely the matrix $X_{\CP}$ defined on
page~\pageref{page:XP}. Thus we obtain the following corollary of
Theorem~\ref{theo:aut}.

\begin{cor}\label{cor:pe-fi}
  Let $(\CP,\CQ)$ be an equitable partition of a matrix
  $A\in\RR^{V\times W}$, and let
  $C:=\Pi_{\CP}$ and $D:=\Pi_{\CQ}$. Then $(CC^{\str},DD^{\str})$ is a fractional
  automorphism of $A$.
\end{cor}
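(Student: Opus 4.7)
The plan is to observe that this corollary is essentially a reformulation of Theorem~\ref{theo:aut}(1) in the language of partition matrices, and the required identification of matrices is already laid out in the paragraph preceding the statement.

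First I would verify that $CC^{\str} = X_{\CP}$, where $X_{\CP}$ is the matrix defined on page~\pageref{page:XP}. This is exactly the content of equation \eqref{eq:pe0a}: for $v,v' \in V$, the $(v,v')$-entry of $CC^{\str}$ is
\[
(CC^{\str})_{vv'} = \sum_{t} C_{vt}\,\Pi^{\str}_{tv'} = \sum_{t} \frac{C_{vt}\,C_{v't}}{\sum_{v'' \in V} C_{v''t}},
\]
and since each row of $C = \Pi_{\CP}$ has a unique $1$, the summand is nonzero only for the unique $t$ (i.e. the class $P \in \CP$) containing $v$; the term vanishes unless $v'$ is also in $P$, in which case the contribution is $1/|P|$. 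Hence $CC^{\str}_{vv'} = 1/|P|$ if $v,v'$ lie in a common class $P \in \CP$, and $0$ otherwise. This is exactly $X_{\CP}$.

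An identical computation shows $DD^{\str} = Y_{\CQ}$. Now Theorem~\ref{theo:aut}(1) says precisely that $(X_{\CP}, Y_{\CQ})$ is a fractional automorphism of $A$, so $(CC^{\str}, DD^{\str})$ is too. The matrices $CC^{\str}$ and $DD^{\str}$ are (symmetric) doubly stochastic — this was already noted in the paragraph preceding the statement and is clear from the explicit entry formulas — and the equation $CC^{\str} A = A\, DD^{\str}$ is exactly $X_{\CP} A = A\, Y_{\CQ}$ from Theorem~\ref{theo:aut}.

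There is no substantive obstacle here: the corollary is really a change of notation, recording that the doubly stochastic matrix $X_{\CP}$ associated with an equitable partition factors as $CC^{\str}$ through the partition matrix $C = \Pi_{\CP}$. The only thing to be careful about is the scaling in $\Pi^{\str}$: one must check the normalisation so that $CC^{\str}$ comes out with entries $1/|P|$ and not $|P|$ or $1$, which is why the scaled transpose was introduced in the first place.
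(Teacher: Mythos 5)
Your proof is correct and is essentially the paper's own argument: the paper derives the corollary immediately from equation \eqref{eq:pe0a}, which identifies $\Pi_\CP\Pi_\CP^{\str}$ with $X_\CP$, and then invokes Theorem~\ref{theo:aut}(1). Your entrywise verification of $CC^{\str}=X_\CP$ and $DD^{\str}=Y_\CQ$, including the check on the normalisation in the scaled transpose, matches that reasoning exactly.
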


A \emph{factor matrix} of a matrix $A\in\ORR^{V\times W}$ is a
matrix 
\[
B=\Pi_{\CP}^{\str}A\Pi_{\CQ}\in\ORR^{\CP\times\CQ},
\]
where $(\CP,\CQ)$ is an equitable partition of $A$. The asymmetry in
the definition (multiplying with $\Pi_{\CP}^{\str}$ rather than
$\Pi_{\CP}^{\tr}$) may seem strange first, but turns out to be
necessary in several places. An immediate advantage of it is that we
multiply with stochastic matrices from both sides.
Note that for all
$P\in\CP,Q\in\CQ$,
\begin{equation}\label{eq:pe1}
B_{PQ}=\frac{1}{|P|}F^A(P,Q)=F^A(v,Q)
\end{equation}
for some (and hence for all) $v\in P$. (This is the number we
sometimes denote by $f_{PQ}$.)  We will see that factor matrices still
carry all information about a matrix necessary to solve systems of
linear equations and linear programs. 

As the dimensions of $B$ are determined by
the number of classes of the partition, there is a
unique smallest factor matrix
\[
[A]:=\Pi_{\CP_\infty}^{\str}A\Pi_{\CQ_\infty},
\]
where, as usual, $(\CP_\infty,\CQ_\infty)$ denotes the coarsest
equitable partition of $A$. We call $[A]$ the \emph{core factor} of
$A$.  Theorem~\ref{theo:alg} implies that we can compute the core
factor in quasilinear time.

\begin{cor}
  There is an algorithm that, given a sparse representation of a
  matrix $A$, computes the core factor $[A]$ in time
  $O((n+m)\log n)$.
\end{cor}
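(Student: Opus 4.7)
The plan is to show that computing $[A]$ reduces to (i) computing the coarsest equitable partition $(\CP_\infty,\CQ_\infty)$ and (ii) performing one sweep of cheap bookkeeping to fill in the entries of the factor matrix. Step (i) is already handled by Theorem~\ref{theo:alg} in time $O((n+m)\log n)$, so the whole argument hinges on the fact that step (ii) can be carried out in time $O(n+m)$, and hence gets absorbed into the partition cost.

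First I would invoke Theorem~\ref{theo:alg} to obtain, in time $O((n+m)\log n)$, the coarsest equitable partition $(\CP_\infty,\CQ_\infty)$, together with a table $\mathrm{cls}$ that maps each row index $v$ to the class $P\in\CP_\infty$ containing it, and each column index $w$ to the class $Q\in\CQ_\infty$ containing it. Such a table is a byproduct of any reasonable implementation of colour refinement. As I go, I would also record one representative $v_P\in P$ for each $P\in\CP_\infty$.

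Next I would populate the entries of $[A]\in\ORR^{\CP_\infty\times\CQ_\infty}$ using the identity
\[
[A]_{PQ}=F^A(v_P,Q)=\sum_{w\in Q}A_{v_P w}
\]
from \eqref{eq:pe1}. For each class $P\in\CP_\infty$, I scan the nonzero entries $A_{v_P w}$ in the sparse representation of the row $v_P$, look up $Q=\mathrm{cls}(w)$, and accumulate $A_{v_P w}$ into an accumulator for the pair $(P,Q)$. Using a simple mechanism to remember which pairs $(P,Q)$ have been touched (for instance a linked list of active $Q$'s per row), I avoid scanning empty buckets and emit only the nonzero entries of $[A]$. The cost for the row of $v_P$ is proportional to the number of nonzeros in that row, so summed over all $P$ the total cost of this sweep is at most $O(n + m)$.

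The only mildly delicate point is the one I just mentioned: the factor matrix has $|\CP_\infty|\cdot|\CQ_\infty|$ potential entries, which could be superlinear, so I must not initialise a dense $|\CP_\infty|\times|\CQ_\infty|$ array at any point. Using the active-bucket trick above (or, equivalently, a hash map keyed on $(P,Q)$), every operation done during step (ii) is charged to a nonzero entry of the original sparse representation of $A$, keeping that step in $O(n+m)$ time. Adding the $O((n+m)\log n)$ from step (i) yields the claimed bound.
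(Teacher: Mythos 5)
Your proposal is correct and follows the same route as the paper, which simply derives the corollary from Theorem~\ref{theo:alg} and leaves the extraction of $[A]$ from the coarsest equitable partition implicit. Your additional details --- using the identity $[A]_{PQ}=F^A(v_P,Q)$ from \eqref{eq:pe1} to scan only one representative row per class and avoiding a dense $|\CP_\infty|\times|\CQ_\infty|$ array --- correctly fill in the $O(n+m)$ bookkeeping step that the paper omits.
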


\subsection{Partition Equivalence}
We define the relation $\approx$ on the class of all matrices by letting $A^1\approx
A^2$ if there are factor matrices $B^1$ of $A^1$ and $B^2$ of $A^2$
such that $B^1$ and $B^2$ are isomorphic. 
Observe that for every matrix
$A\in\ORR^{V\times W}$,
\begin{equation}\label{eq:ep2}
A\approx [A],
\end{equation}
because $I[A]I'$, where $I,I'$ denote the identity
matrices of the right dimensions.

Moreover, for all matrices $A^1\in\ORR^{V^1\times W^1}$ and
$A^2\in\ORR^{V^2\times W^2}$ we have
\begin{equation}\label{eq:ep3}
A^1\simeq A^2\implies A^1\approx A^2.
\end{equation}
To see this, suppose that $A^1\simeq A^2$ and let $(\CP,\CQ)$ be a balanced
equitable joint partition of $A^1$ and $A^2$. For $j=1,2$, let
$\CP^j:=\{P\cap V^j\mid P\in\CP\}$ and $\CQ^j:=\{Q\cap W^j\mid
Q\in\CQ\}$. Then $(\CP^j,\CQ^j)$ is an equitable partition of $A^j$. Let $C^j:=\Pi_{\CP^j}$ and $D^j:=\Pi_{\CQ^j}$ and
\[
B^j:=(C^j)^{\str}A^jD^j.
\]
We claim that $B^1\cong B^2$ via the isomorphism $P^1\mapsto P^2$,
$Q^1\mapsto Q^2$. To see this, let $P\in\CP,Q\in\CQ$. Then by
\eqref{eq:pe1} and \eqref{eq:e1}
\[
B^1_{P^1Q^1}=F^{A^1}(v^1,Q^1)=F^{A^1\oplus A^2}(v^1,Q) =F^{A^1\oplus A^2}(v^2,Q)=F^{A^2}(v^2,Q^2)=B^2_{P^2Q^2}
\]
for some (and hence for all) $v^1\in P^1,v^2\in P^2$. 

Maybe surprisingly, the relation $\approx$ is \emph{not} an
equivalence relation. It is obviously reflexive and symmetric, but the
next example shows that it is not transitive.

\begin{exa}\label{exa:icf}
  Consider the matrix $\tilde A$ of Example~\ref{exa:intro} and the
  factor matrices $[\tilde A]$ and $[[\tilde A]]$. 
    By \eqref{eq:ep2}, we have $\tilde A\approx[\tilde A]$ and
  $[\tilde A]\approx[[\tilde A]]$. However, we clearly have $\tilde
  A\not\approx [[\tilde A]]$,
  because there is no factor of $\tilde A$ and thus no matrix $B$ with
  $\tilde A\approx B$ that is smaller
  than the core factor.
\end{exa}

We let $\approx^*$ be the transitive closure of $\approx$ and call two
matrices $A^1,A^2$ \emph{partition equivalent} if $A^1\approx^* A^2$.

Let $A\in\ORR^{V\times W}$. We let $[A]_0:=A$ and $[A]_{i+1}:=[[A]_i]$
for every $i\ge 0$. Then there is an $i\le|V|+|W|$ such that
$[A]_i=[A]_{i+1}$. We denote $[A]_i$ by $\llbracket A\rrbracket$ and
call it the \emph{iterated core factor} of $A$. Observe that
\eqref{eq:ep2} implies that 
\begin{equation}
  \label{eq:ep4}
  A\approx^*\llbracket A\rrbracket.
\end{equation}

The following example shows that $A^1\approx^* A^2$ does not
necessarily imply 
$\llbracket A^1\rrbracket\cong\llbracket A^2\rrbracket$ and that
$\llbracket A\rrbracket$ is not necessarily the smallest matrix
partition equivalent to $A$. This is unfortunate, because it leaves us without an
efficient way of deciding partition equivalence. 

\begin{exa}\label{exa:cf}
  Consider the matrices
  \[
  A^1=\left(
  \begin{array}{cccc|cc}
    0&0&1&0&1&0\\
    0&0&0&1&1&0\\
    1&0&0&0&0&1\\
    0&1&0&0&0&1\\
    \hline
    1&1&0&0&0&0\\
    0&0&1&1&0&0
  \end{array}\right)
  \quad\text{and}\quad
  A^2=
  \begin{pmatrix}
    1&1\\
    2&0
  \end{pmatrix}
  \]
  Suppose that for $j=1,2$ the rows and columns of the matrix $A^j$ are indexed by
  $v_i^j,w_i^j$. 

  The matrices $A^1$ and $A^2$ are partition equivalent: for the
  equitable partition (indicated by the lines in $A^1$)
  \[
  (\CP,\CQ):=\Big(
  \big\{ \{v_1^1,v_2^1,v_3^1,v_4^1\},\{v_5^1,v_6^1\}\big\},
  \big\{ \{w_1^1,w_2^1,w_3^1,w_4^1\},\{w_5^1,w_6^1\}\big\}
  \Big)
  \]
  of $A^1$ we have
  \[
  \Pi^{\str}_\CP A^1\Pi_\CQ=
  \begin{pmatrix}
    \frac{1}{4}&\frac{1}{4}&\frac{1}{4}&\frac{1}{4}&0&0\\
    0&0&0&0&\frac{1}{2}&\frac{1}{2}
  \end{pmatrix}
    \begin{pmatrix}
    0&0&1&0&1&0\\
    0&0&0&1&1&0\\
    1&0&0&0&0&1\\
    0&1&0&0&0&1\\
    1&1&0&0&0&0\\
    0&0&1&1&0&0
  \end{pmatrix}
  \begin{pmatrix}
    1&0\\
    1&0\\
    1&0\\
    1&0\\
    0&1\\
    0&1
  \end{pmatrix}
  =  \begin{pmatrix}
    1&1\\
    2&0
  \end{pmatrix}
  \cong A^2.
  \]
  Thus $A^1\approx A^2$. 
  
  The coarsest equitable partition of $A^1$ is the trivial partition
  $\Big(\big\{\{v_1^1,\ldots,v_6^1\}\big\},\big\{\{w_1^1,\ldots,w_6^1\}\big\}\Big)$,
  and thus
  \[
  \llbracket A^1\rrbracket=[A^1]=(2).
  \]
  The coarsest equitable partition of $A^2$ is the identity partition
  $\Big(\big\{\{v_1^2\},\{v_2^2\}\big\},\big\{\{w_1^2\},\{w_2^2\}\big\}\Big)$,
  which yields
  \[
  \llbracket A^2\rrbracket=A^2.
  \]
  Thus $\llbracket A^1\rrbracket\not\cong \llbracket A^2\rrbracket$.

  Note that this also means that the smallest matrix partition
  equivalent to $A^2$ is not $\llbracket A^2\rrbracket$, as one might
  have expected, but $\llbracket A^1\rrbracket$.
\end{exa}

It remains an open
question whether partition equivalence is decidable, or even
decidable in polynomial time. Note, however, that we can compute
$\llbracket A\rrbracket$ from $A$ in time $O(n(n+m)\log n)$. It is
conceivable that this can be improved to $O((n+m)\log n)$, but this
remains open as well.

\section{Reducing the Dimension of a Linear Program}
In this section, we will apply our theory of fractional automorphisms
and partition equivalence to solving systems of linear equations and
linear programs.
Let $A\in\RR^{V\times W}$, $b\in\RR^V$, $c\in\RR^W$,
and $x=(x_w\mid w\in W)$. We consider the system \eqref{EAb}
\begin{equation}
  \label{EAb}
  \tag{$E_{A,b}$}
  Ax=b
\end{equation}
of linear equations, the linear program \eqref{LAbc} in standard form
\begin{equation}
  \label{LAbc}
  \tag{$L_{A,b,c}$}
  \begin{array}{rl}
    \text{min }&c^{\tr} x\\
\text{subject to }&Ax=b,\;x\ge 0,
  \end{array}
\end{equation}
and the linear program \eqref{DAbc} in dual form
\begin{equation}
  \label{DAbc}
  \tag{$D_{A,b,c}$}
  \begin{array}{rl}
    \text{max }&c^{\tr} x\\
\text{subject to }&Ax\le b.
  \end{array}
\end{equation}
Our results actually extend to arbitrary linear programs, but we focus
on these for the ease of presentation.

We need to take the vectors $b$ and $c$ into
account. Let $A,b,c$ be as above. We define a matrix $\tilde A=\tilde
A(A,b,c)\in\ORR^{(V\cup\{v_\infty\})\times (W\cup\{w_\infty\})}$, where we assume
$v_\infty,w_\infty\not\in V\cup W$ and $v_\infty\neq w_\infty$, by
\[
\tilde A_{vw}:=
\begin{cases}
  A_{vw}&\text{if }v\in V,w\in W,\\
  c_w&\text{if }v=v_\infty,w\in W,\\
  b_v&\text{if }v\in V,w=w_\infty,\\
  \infty&\text{if }v=v_\infty,w=w_\infty
\end{cases}
\]
(see Example~\ref{exa:intro}).
As $A$ is a real matrix (that does not contain $\infty$ as an entry), every equitable partition
$(\tilde\CP,\tilde\CQ)$ of $\tilde A$ contains $\{v_\infty\}$ and $\{w_\infty\}$
as separate classes. If we let $\CP:=\tilde\CP\setminus\{v_\infty\}$ and
$\CQ:=\tilde\CQ\setminus\{w_\infty\}$, then $(\CP,\CQ)$ is an equitable
partition of $A$ satisfying
\begin{align}
  \label{eq:a1}
  b_v&=b_{v'}&\text{for all }P\in\CP,v,v'\in P,\\
  \label{eq:a2}
  c_w&=c_{w'}&\text{for all }Q\in\CQ,w,w'\in Q.
\end{align}
Furthermore, if $(\tilde\CP,\tilde\CQ)$ is the coarsest equitable
partition of $\tilde A$ then $(\CP,\CQ)$ is the coarsest equitable
partition of $A$ that satisfies \eqref{eq:a1} and \eqref{eq:a2}.

\begin{lem}[Reduction Lemma]\label{lem:red}
  Let $A,b,c,\tilde A$ as above. Let $(\tilde\CP,\tilde\CQ)$ an
  equitable partition of $\tilde A$ and
  $\CP:=\tilde\CP\setminus\{v_\infty\}$,
  $\CQ:=\tilde\CQ\setminus\{w_\infty\}$. Furthermore, let $C:=\Pi_\CP$ and
  $D:=\Pi_{\CQ}$ and $A':=C^{\str}AD$, $b':=C^{\str}b$, and $c':=D^{\tr}c$.
  \begin{enumerate}
  \item If $x\in\RR^W$ is a solution to \eqref{EAb} or a feasible
    solution to \eqref{LAbc} or \eqref{DAbc}, then $x':=D^{\str}x$ is a
    solution to \textup($E_{A',b'}$\textup) or a feasible solution to
    \textup($L_{A',b',c'}$\textup) or \textup($D_{A',b',c'}$\textup),
    respectively.

    Moreover, if $x$ is an optimal solution to \eqref{LAbc} or
    \eqref{DAbc}, then $x'$ is an optimal solution to
    \textup($L_{A',b',c'}$\textup) or \textup($D_{A',b',c'}$\textup),
    respectively.

  \item If $x'\in\RR^{\CQ}$ is a solution to ($E_{A',b'}$) or a
    feasible solution to \textup($L_{A',b',c}$\textup) or
    \textup($D_{A',b',c'}$\textup), then $x:=Dx'$ is a solution to
    \eqref{EAb} or a feasible solution to \eqref{LAbc} or
    \eqref{DAbc}, respectively.

    Moreover, if $x'$ is an optimal solution to
    \textup($L_{A',b',c}$\textup) or \textup($D_{A',b',c'}$\textup),
    then $x$ is an optimal solution to \eqref{LAbc} or \eqref{DAbc},
    respectively.
  \end{enumerate}
\end{lem}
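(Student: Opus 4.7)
The plan is to reduce all six claims (the two directions for equations, primal LP, and dual LP) to three algebraic identities, after which each implication becomes a short computation. The identities are: (i) $C^{\str}C=I$ and $D^{\str}D=I$, which follow immediately from the definition of the scaled transpose of a partition matrix (each column of $\Pi^{\str}$ is supported on a single class and sums to $1$); (ii) $CC^{\str}A=ADD^{\str}$, which is exactly the conclusion of Corollary~\ref{cor:pe-fi} applied to the equitable partition $(\CP,\CQ)$ of $A$; and (iii) $CC^{\str}b=b$ and $c^{\tr}DD^{\str}=c^{\tr}$, which hold because the action of $CC^{\str}$ (resp.\ $DD^{\str}$) on a vector replaces each entry by the average of its entries over its class, while $b$ (resp.\ $c$) is constant on classes by \eqref{eq:a1} (resp.\ \eqref{eq:a2}).

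For part~(1), setting $x':=D^{\str}x$, I would compute
\[
A'x' \;=\; C^{\str}ADD^{\str}x \;\stackrel{(\mathrm{ii})}{=}\; C^{\str}CC^{\str}Ax \;\stackrel{(\mathrm{i})}{=}\; C^{\str}Ax.
\]
This reduces every constraint on $x$ to the same constraint on $x'$ after left-multiplying by the nonnegative matrix $C^{\str}$: $Ax=b$ yields $A'x'=C^{\str}b=b'$, and $Ax\le b$ yields $A'x'\le b'$. Nonnegativity $x'\ge 0$ is inherited from $x$ via $D^{\str}\ge 0$, and the objective value is preserved because $(c')^{\tr}x'=c^{\tr}DD^{\str}x=c^{\tr}x$ by (iii).

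For part~(2), setting $x:=Dx'$, the subtle step is to recover $Ax=b$, i.e.\ $ADx'=b$, from the averaged identity $C^{\str}ADx'=C^{\str}b$. The key observation is that $CC^{\str}AD=ADD^{\str}D=AD$ by (ii) and (i), so every column of $AD$ is fixed by $CC^{\str}$ and hence constant on $\CP$-classes; in particular $ADx'$ is class-constant. Since $b$ is also class-constant, so is $ADx'-b$, and $C^{\str}$ is injective on class-constant vectors (it simply records the common value on each class). Therefore $C^{\str}(ADx'-b)=0$ forces $ADx'=b$; the inequality case is identical with ``$\le 0$'' in place of ``$=0$''. Nonnegativity $x=Dx'\ge 0$ is immediate, and the objective equality $c^{\tr}x=c^{\tr}Dx'=(c')^{\tr}x'$ is trivial. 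I expect this class-constancy argument to be the only non-routine point; everything else is bookkeeping.

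Finally, optimality in both directions follows formally: the maps $x\mapsto D^{\str}x$ and $x'\mapsto Dx'$ each send feasible solutions to feasible solutions and preserve the objective, so an improved feasible solution on one side would, when transferred to the other side, contradict the assumed optimality. This closes all six cases of the lemma.
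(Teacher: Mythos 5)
Your proof is correct and follows essentially the same route as the paper: the fractional automorphism $(CC^{\str},DD^{\str})$ supplied by Corollary~\ref{cor:pe-fi}, the partition-matrix identities, class-constancy of $b$ and $c$ from \eqref{eq:a1} and \eqref{eq:a2}, and the formal transfer argument for optimality. Your identity $C^{\str}C=I$ is just a cleaner packaging of the paper's $C^{\str}CC^{\str}=C^{\str}$ and $DD^{\str}D=D$, and your class-constancy/injectivity argument in part~(2) is equivalent to the paper's direct chain $ADx'=CC^{\str}ADx'=Cb'=CC^{\str}b=b$ (applying $C$ to $C^{\str}(ADx'-b)=0$ is exactly your injectivity step), with the added benefit that you make the inequality case for the dual form explicit where the paper only says ``similar''.
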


\begin{proof}
  We only prove this for the linear programs \eqref{LAbc} and
  ($L_{A',b',c}$) in standard form; the proofs for systems of linear
  equations and linear programs in dual form are similar.

  Observe first that $C^{\str}CC^{\str}=C^{\str}$. To see this, recall from
  \eqref{eq:pe0a} that $(CC^{\str})_{v'v}=\frac{1}{|P|}$ if
  $v,v'\in P$ for some $P\in\CP$ and $(CC^{\str})_{v'v}=0$
  otherwise. Let $v\in V$ and
  $P\in\CP$. Then
  \[
  (C^{\str}CC^{\str})_{Pv}=\sum_{v'\in V}C^{\str}_{Pv'}(C^{\str}C)_{v'v}=\sum_{v'\in P}\frac{1}{|P|}(C^{\str}C)_{v'v}=\left\{
  \begin{array}{ll}
    \sum_{v'\in P}\left(\frac{1}{|P|}\right)^2=\frac{1}{|P|}&\text{if
    }v\in P,\\
    0&\text{otherwise}.
  \end{array}
  \right\}=C^{\str}_{Pv}.
  \]
  Similarly, $DD^{\str}D=D$.

  To prove (1), let $x\in\RR^W$ be a feasible solution to \eqref{LAbc}
  and $x':=D^{\str}x\in\RR^{\CQ}$. Then $x'\ge 0$ because $x\ge0$ and $D^{\str}$
  is nonnegative. Furthermore,
  \[
  A'x'=C^{\str}ADD^{\str}x\overset{(a)}{=}C^{\str}CC^{\str}Ax\overset{(b)}{=}C^{\str}b=b'.
  \]
  Here (a) holds because $(CC^{\str},DD^{\str})$ is a fractional automorphism of
  $A$ and (b) holds because $C^{\str}CC^{\str}=C^{\str}$ and $Ax=b$. Thus $x'$ is a feasible
  solution to ($L_{A',b',c}$).

  Before we prove the second assertion of (1) regarding optimal solutions,
  we prove the first assertion of (2).
  Let $x'\in\RR^{\CQ}$ be a feasible solution to ($L_{A',b',c}$)
  and $x:=Dx'\in\RR^W$. Then $x\ge 0$ because $x'\ge0$ and $D$
  is nonnegative. Furthermore,
  \[
  Ax=ADx'\overset{(c)}=ADD^{\str}Dx'\overset{(d)}{=}CC^{\str}ADx'=CA'x'=Cb'=CC^{\str}b\overset{(e)}{=}b.
  \]
  Here (c) holds, because $DD^{\str}D=D$, and (d) holds, because $(CC^{\str},DD^{\str})$ is a fractional automorphism of
  $A$. To prove (e), let $v\in V$, and let $P\in\CP$ such that $v\in
  P$. By \eqref{eq:a1}, we have 
  \[
  b_v=\frac{1}{|P|}\sum_{v'\in P}b_{v'}= (CC^{\str}b)_v.
  \]
  Thus $x$ is a feasible solution to \eqref{LAbc}.

  It remains to prove the two assertions about optimal solutions. Suppose
  first that $x\in\RR^W$ is an optimal solution to \eqref{LAbc}, and
  let $x':=D^{\str}x$. Then $x'$ is a feasible solution to
  ($L_{A',b',c}$). We claim that it is optimal. Let $y'$ be another
  feasible solution to ($L_{A',b',c}$). We shall prove that
  $(c')^{\tr}x'\le (c')^{\tr}y'$. Let $y=Dy'$. Then $y$ is a feasible solution
  to \eqref{LAbc}, and thus $c^{\tr}x\le c^{\tr}y$ by the optimality of $x$.
  Thus
  \[
  (c')^{\mathrm{t}}x'=c^{\tr}DD^{\str}x\overset{(f)}{=}c^{\tr}x\le c^{\tr}y=c^{\tr}Dy'=(c')^{\tr}y'.
  \]
  Here (f) holds because $c^{\tr}DD^{\str}=c^{\tr}$. To see this, let $w\in W$ and
  $Q\in\CQ$ such that $w\in Q$. Then by \eqref{eq:a2},
  \[
  c^{\tr}_w=\frac{1}{|Q|}\sum_{w'\in Q}c_{w'}=(c^{\tr}DD^{\str})_{w}.
  \]
  Suppose conversely that $x'\in\RR^{\CQ}$ is an optimal solution to ($L_{A',b',c}$)
  and let $x:=Dx'$. Then $x$ is a feasible solution to
  \eqref{LAbc}. Let $y$ be another feasible solution. Then $y':=D^{\str}y$ is a
  feasible solution to ($L_{A',b',c}$), and by the optimality of $x'$
  we have $(c')^{\tr}x'\le (c')^{\tr}y'$. Thus
  \[
  c^{\tr}x\overset{(f)}{=}c^{\tr}DD^{\str}x=(c')^{\tr}x'\le(c')^{\tr}y'=c^{\tr}DD^{\str}y\overset{(f)}{=}c^{\tr}y.
  \]
  The two equations marked (f) hold, because $c^{\tr}DD^{\str}=c^{\tr}$, as
  we have seen above.
\end{proof}

Note that if we apply the reduction lemma to a system \eqref{EAb} of
linear equations, then the vector $c$ is irrelevant, and we can simply
let $c=0$.

For simplicity, in the following we state our results only for linear
programs in standard form. The corresponding results for linear
programs in dual form or systems of linear equations hold as well.

\begin{theo}\label{theo:red}
  For $j=1,2$, let $A^j\in\RR^{V^j\times W^j}$ and $b^j\in\RR^{V^j}$
  and $c^j\in\RR^{W^j}$ and $\tilde A^j:=\tilde
  A(A^j,b^j,c^j)$. Suppose that
  \[
  \tilde A^1\approx^*\tilde A^2.
  \]
  Then for $j=1,2$ there is a matrix $M^j\in\RR^{W^{3-j}\times W^j}$ such that for all $x\in\RR^{W^j}$, if $x$ is a feasible solution to
    \textup($L_{A^j,b^j,c^j}$\textup) 
    then $M^jx$ is a feasible solution to
    \textup($L_{A^{3-j},b^{3-j},c^{3-j}}$\textup). 

    Furthermore,   if $x$ is an optimal solution to
    \textup($L_{A^j,b^j,c^j}$\textup) 
    then $M^jx$ is an optimal solution to
    \textup($L_{A^{3-j},b^{3-j},c^{3-j}}$\textup). 
\end{theo}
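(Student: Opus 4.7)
The plan is to induct on the length $k$ of a chain $\tilde A^1 = T_0 \approx T_1 \approx \cdots \approx T_k = \tilde A^2$ witnessing $\tilde A^1 \approx^* \tilde A^2$; the substantive argument is the base case $k=1$, and the inductive step composes the resulting linear maps along the chain.

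For $k=1$, the relation $\tilde A^1 \approx \tilde A^2$ furnishes equitable partitions $(\tilde\CP^j,\tilde\CQ^j)$ of $\tilde A^j$ whose factor matrices $\tilde B^j := \Pi_{\tilde\CP^j}^{\str}\tilde A^j\Pi_{\tilde\CQ^j}$ are isomorphic. Since the unique $\infty$-entry of $\tilde A^j$ at $(v_\infty,w_\infty)$ forces $\{v_\infty\}$ and $\{w_\infty\}$ to be singleton classes of $\tilde\CP^j,\tilde\CQ^j$, writing $\CP^j:=\tilde\CP^j\setminus\{\{v_\infty\}\}$, $\CQ^j:=\tilde\CQ^j\setminus\{\{w_\infty\}\}$, $C^j:=\Pi_{\CP^j}$, $D^j:=\Pi_{\CQ^j}$, $\hat A^j:=(C^j)^{\str}A^jD^j$, $\hat b^j:=(C^j)^{\str}b^j$, and $\hat c^j:=(D^j)^{\tr}c^j$, a direct calculation using \eqref{eq:pe1} yields $\tilde B^j=\tilde A(\hat A^j,\hat b^j,\hat c^j)$. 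Any matrix isomorphism $\tilde B^1\cong\tilde B^2$ must match the unique $\infty$-entry, so it restricts to bijections $\pi:\CP^1\to\CP^2$ and $\rho:\CQ^1\to\CQ^2$ carrying $(\hat A^1,\hat b^1,\hat c^1)$ onto $(\hat A^2,\hat b^2,\hat c^2)$. Letting $R\in\RR^{\CQ^2\times\CQ^1}$ denote the permutation matrix of $\rho$, the map $x'\mapsto Rx'$ then bijects the feasible (resp.\ optimal) solutions of $L_{\hat A^1,\hat b^1,\hat c^1}$ with those of $L_{\hat A^2,\hat b^2,\hat c^2}$.

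Define $M^1:=D^2 R (D^1)^{\str}$ and, symmetrically, $M^2:=D^1 R^{\tr} (D^2)^{\str}$. Given $x\in\RR^{W^1}$ feasible (resp.\ optimal) for $L_{A^1,b^1,c^1}$, Lemma~\ref{lem:red}(1) gives that $(D^1)^{\str}x$ is feasible (resp.\ optimal) for $L_{\hat A^1,\hat b^1,\hat c^1}$; the permutation step gives $R(D^1)^{\str}x$ feasible (resp.\ optimal) for $L_{\hat A^2,\hat b^2,\hat c^2}$; and Lemma~\ref{lem:red}(2) lifts this to $M^1 x$ feasible (resp.\ optimal) for $L_{A^2,b^2,c^2}$. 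For the inductive step with $k>1$, expand each $\approx$-step $S\approx S'$ in the chain by inserting common factors and (if needed) the iterated core factor of the intermediate matrix, so that every link of the expanded chain is an $\approx$-step between matrices of the form $\tilde A(\cdot)$; since factors of $\tilde A(\cdot)$-form matrices inherit this form, the base case applies at each link and composition yields the desired $M^1,M^2$. The main obstacle is precisely this bookkeeping: one has to verify that the expansion can always be carried out so that Lemma~\ref{lem:red} applies uniformly throughout, which relies on the fact that the $\infty$-entry at a singleton class is preserved by every factoring operation.
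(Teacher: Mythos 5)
Your single-step case ($k=1$) is essentially the paper's argument: identify $\tilde B^j$ with $\tilde A(\hat A^j,\hat b^j,\hat c^j)$ via \eqref{eq:pe1}, observe that any isomorphism $\tilde B^1\cong\tilde B^2$ must match the unique $\infty$-entries and hence restricts to bijections $\pi,\rho$ on the remaining classes, and sandwich the resulting permutation between the two halves of Lemma~\ref{lem:red} to obtain $M^1=D^2R(D^1)^{\str}$. That part is correct and coincides with the paper's construction.

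The gap is in your inductive step. A chain witnessing $\tilde A^1\approx^*\tilde A^2$ passes through \emph{arbitrary} matrices $T_i\in\ORR^{V\times W}$, and these need not have a unique $\infty$-entry: an equitable partition of $T_i$ only refines the split into $\infty$-rows versus non-$\infty$-rows (and likewise for columns), so $T_i$, and even its core factor and iterated core factor, can retain several $\infty$-entries (take two $\infty$-rows that differ in their real entries; no equitable partition merges them). Hence your base case does not apply to the link $T_i\approx T_{i+1}$, and neither of your proposed repairs closes this: ``inserting common factors'' only replaces a link $T_i\approx T_{i+1}$ by $T_i\approx F\cong F'\approx T_{i+1}$, which is again a $\approx^*$-statement through possibly ill-formed intermediates, and passing to $\llbracket T_i\rrbracket$ neither reduces the number of $\infty$-entries (for the reason just given) nor interacts well with $\approx$ (cf.\ Example~\ref{exa:cf}). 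Your closing remark that the bookkeeping ``relies on the fact that the $\infty$-entry at a singleton class is preserved by every factoring operation'' is true but beside the point, because the intermediate matrices of the chain are not factors of anything of the form $\tilde A(\cdot)$. The paper closes exactly this gap with a separate device: a collapsing operation $B\mapsto\overline B$ that merges all $\infty$-rows (averaging their real entries) and all $\infty$-columns (summing them) into a single row and column, together with the claim that $B\approx B'$ implies $\overline B\approx\overline B'$. Applied to the whole chain, this yields a chain of matrices each with exactly one $\infty$-entry, i.e.\ each of the form $\tilde A(A,b,c)$, after which your single-step argument composes link by link as intended. Without this (or an equivalent) lemma the induction does not go through.
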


\begin{proof}
  Assume first that $\tilde A^1\approx \tilde A^2$. For $j=1,2$, let
  $(\tilde\CP^j,\tilde\CQ^j)$ be an equitable partition of $A^j$ such
  that 
  \[
  \tilde B^1:=\Pi^{\str}_{\tilde\CP^1}\tilde A^1
  \Pi^{\str}_{\tilde\CQ^1}\cong 
  \Pi^{\str}_{\tilde\CP^2}\tilde A^2
  \Pi^{\str}_{\tilde\CQ^2}=:\tilde B^2.
  \]
  Let
  $\tilde\pi:\tilde\CP^1\to\tilde\CP^2$ and
  $\tilde\rho:\tilde\CQ^1\to\tilde\CQ^2$ be bijections such that $\tilde
  B^1_{PQ}=\tilde B^2_{\tilde\pi(P)\tilde\rho(Q)}$ for all
  $P\in\tilde\CP^1,Q\in\tilde\CQ^1$. Let $v_\infty^j,w_\infty^j$ be the indices
  of the row and column of $\tilde A^j$ such that $\tilde
  A^j_{v_\infty^jw_\infty^j}=\infty$. Recall that $P_\infty^j:=\{v_\infty^j\}$ and $Q_\infty^j:=\{w_\infty^j\}$
  are classes in $\tilde\CP^j,\tilde\CQ^j$, respectively.  Observe that $\tilde
  B^j_{P_\infty^jQ_\infty^j}=\infty$ and that all other entries of $\tilde B^j$
  are real. Thus $\tilde\pi(P_\infty^1)=P_\infty^2$
  and $\tilde\rho(Q_\infty^1)=Q_\infty^2$. 

  Let
  $\CP^j:=\tilde\CP^j\setminus\{P_\infty^j\}$ and
  $\CQ^j:=\tilde\CP^j\setminus\{P_\infty^j\}$ and $C^j:=\Pi_{\CP^j},D^j:=\Pi_{\CQ^j}$. Let
  $B^j=(C^j)^{\str}A^jD^j\in\RR^{\CP^j\times\CQ^j}$. Then $B^j_{PQ}=\tilde
  B^j_{PQ}$ for all $P\in\CP^j,Q\in\CQ^j$. Let $d^j:=(C^j)^{\str}b^j$. Then
  $d^j_P=\tilde B^j_{PQ_\infty^j}$ for all $P\in\CP^j$. Let
  $e^j:=(D^j)^\tr c^j$. Then
  $e^j_Q=\tilde B^j_{P_\infty^jQ}$ for all $Q\in\CQ^j$. 

  Now let $\pi:\CP^1\to\CP^2$ be the restriction of $\tilde\pi$ to $\CP$, and let
  $\rho:\CQ^1\to\CQ^2$ be the restriction of $\tilde\rho$ to
  $\CQ$. Then for all $P\in\CP^1,Q\in\CQ^1$ we have
  $B^1_{PQ}=B^2_{\pi(P)\rho(Q)}$ and $d^1_P=d^2_{\pi(P)}$ and
  $e^1_Q=e^2_{\rho(Q)}$. Let $X\in\RR^{\CP^1\times\CP^2}$ and
  $Y\in\RR^{\CQ^1\times\CQ^2}$ be permutation matrices corresponding
  to the bijections $\pi,\rho$, respectively. Then $X^{\tr}B^1Y=B^2$ and
  $X^{\tr}b^1=b^2$ and $Y^{\tr}c^1=c^2$. 

  Now let $M^1:=D^2Y^{\tr}(C^1)^{\str}$ and $M^2:=D^1Y(C^2)^{\str}$. It follows from
  the Reduction Lemma~\ref{lem:red} that these matrices satisfy the
  conditions of the lemma.

  Let us now consider the general case $A^1\approx^*A^2$. Then there
  is a sequence $\tilde A^1=B^1,B^2,\ldots,B^m=\tilde A^2$ of matrices
  such that $B^i\approx B^{i+1}$ for all $i\in[m-1]$. The reason the
  claim does not follow immediately from the claim for matrices
  $A^1\approx A^2$ by multiplying the chain of matrices $M^j$ is that the
  intermediate matrices $B^i$ may not be of the form $\tilde
  A(A,b,c)$. Observe that a matrix $B$ is of this form if and only if it has exactly one $\infty$-entry. As
  $\infty$ appears in $A^1$, it is clear that all $B^i$ have at least
  one $\infty$-entry. But they may have more than one. We can handle
  this by collapsing all $\infty$-entries to a single one. To make
  this precise, consider a matrix $B\in\ORR^{V\times W}$. Let
  $V_\infty\subseteq V$ be the set of all indices of rows with at least one
  $\infty$-entry, and let $W_\infty\subseteq W$ be the set of all indices
  of columns with at least one $\infty$-entry. Observe that every
  equitable partition of $B$ refines the partition $(\{V_\infty,V\setminus
  V_\infty\},\{W_\infty,W\setminus W_\infty\})$.  Let $v_\infty\in V_\infty$ and $w_\infty\in W_\infty$
  be arbitrary,
  and let $\overline V:=(V\setminus V_\infty)\cup\{v_\infty\}$ and $\overline
  W:=(W\setminus W_\infty)\cup\{w_\infty\}$. We define the matrix $\overline
  B\in\RR^{\overline V\times\overline W}$ by
  \[
  \overline B_{vw}:=
  \begin{cases}
    B_{vw}&\text{if }v\in V\setminus V_\infty,w\in W\setminus W_\infty,\\
    \frac{1}{|V_\infty|}\sum_{v'\in V_\infty}B_{v'w}&\text{if }v=v_\infty,w\in
    W\setminus W_\infty,\\
    \sum_{w'\in W_\infty}B_{vw'}&\text{if }v\in V\setminus V_\infty,w=w_\infty,\\
    \infty&\text{if }v=v_\infty,w=w_\infty.
  \end{cases}
  \]
  It is not hard to prove that for all matrices $B,B'$,
  \[
  B\approx B'\implies \overline B\approx\overline B'.
  \]
  Thus, coming back to the sequence $\tilde
  A^1=B^1,B^2,\ldots,B^m=\tilde A^2$, we have
  \[
  A^1=\overline A^1=\overline B^1\approx\overline
  B^2\approx\ldots\approx \overline B^m=\overline A^2=A^2.
  \]
  We have $A^j=\overline A^j$, because $A^j$ only has one
  $\infty$-entry. The assertion of the lemma follows.
\end{proof}

Example~\ref{exa:intro} illustrates how the theorem can be
applied. The matrix $D$ in \eqref{eq:exa-intro} is the product of the
partition matrices corresponding to the coarsest equitable partitions
of $\tilde A$ and $[\tilde A]$.

\subsection{Implementation}\label{sec:imp}
Note that Theorem~\ref{theo:red} is not algorithmic, because we do not
know how to decide partition equivalence. Fortunately, this is not
a problem for the main application, where we only apply the
Reduction Lemma~\ref{lem:red} once to the coarsest equitable
partition. We believe that the gain we may have by searching for a
smaller partition equivalent matrix than the core factor, for example
the iterated core factor, is almost always outweighed by the additional
time spent to find such a matrix. But we have not yet conducted any
systematic experiments in this direction yet.

Let us
briefly describe our implementation. 
We are given $A\in\RR^{V\times W}$, $b\in\RR^V$, $c\in\RR^W$ and want
to solve the linear program \eqref{LAbc}. To apply the Reduction
Lemma, instead of computing the coarsest equitable partition of the
matrix $\tilde A(A,b,c)$, we directly compute the coarsest equitable
partition $(\CP,\CQ)$ of $A$ that refines an initial partition
$(\CP_0,\CQ_0)$ depending on the vectors $b$ and $c$: $\CP_0$ is the
partition of $V$ where $v$ and $v'$ are in the same class if
$b_v=b_{v'}$, and $\CQ_0$ is defined similarly from $c$.
(Then $(\CP\cup\{\{v_\infty\}\},\CQ\cup\{\{w_\infty\}\})$ is the coarsest equitable
partition of $\tilde A$.) We compute $(\CP,\CQ)$ using colour
refinement starting from the initial partition $(\CP_0,\CQ_0)$. 

Our colour refinement implementation is based on the algorithm
described in Section~\ref{sec:alg}.

\subsection{Comparison with Symmetry Reduction}\label{sec:symm}
B\"odi, Grundh\"ofer and Herr~\cite{bodgruher10} proposed the
following method of symmetry reduction for linear programs. They define an \emph{automorphism} of \eqref{LAbc} to be a pair
$(X,Y)$ of permutation matrices such that $XA=AY$ and $Xb=b$ and
$Y^\tr c=c$. Automorphisms have an obvious group structure; let
$\Aut(L)$ denote the group of all automorphisms. B\"odi et al.\
observe that for every feasible solution $x$ to \eqref{LAbc},
\[
x'=\frac{1}{|\Aut(L)|}\sum_{(X,Y)\in\Aut(L)}Yx
\]
is a feasible solution as well, and if $x$ is an optimal
solution then $x'$ is an optimal solution. They argue that $x'$ is in
the intersection $E$ of the $1$-eigenspaces of all matrices $Y$ such that
$(X,Y)\in\Aut(L)$ for some $X$. If there are many automorphisms, the
dimension of $E$ can be expected to be much smaller than $n$, and thus
we can reduce the number of variables of the linear program by
projecting to $E$.

To see that this method of symmetry reduction is
subsumed by our Reduction Lemma, observe that the pair $(X,Y)$ of matrices defined by
\begin{equation}\label{eq:bhj}
X:=\frac{1}{|\Aut(L)|}\sum_{(X',Y')\in\Aut(L)}X'
\quad\text{and}\quad
Y:=\frac{1}{|\Aut(L)|}\sum_{(X',Y')\in\Aut(L)}Y'
\end{equation}
is a fractional automorphism of $A$ with $Xb=b$ and $Y^\tr c=c$, and
thus it yields a fractional automorphism of $\tilde A$. By Theorem~\ref{theo:aut},
$(\CP_X,\CQ_Y)$ is an equitable partition of $A$.
The dimension of the $E$ is equal to the rank
of $Y$,  which is at least $|\CQ_Y|$ and thus at least $|\CQ|$ for the
coarsest equitable partition $(\CP,\CQ)$ of $A$ satisfying
\eqref{eq:a1} and \eqref{eq:a2}. Thus the dimension of the linear
program we obtain via the Reduction Lemma is at most that of the linear
program that B\"odi et al.\ project to. The additional benefit of our
method is that colour refinement is much more efficient than computing
the automorphism group of a linear program. (Our experiments,
described in the next section, show that
this last point is what makes our method significantly more efficient
in practice.)

\section{Computational Evaluation}

Our intention here is to investigate the computational benefits of
colour refinement for solving linear programs in the presence of
symmetries. To this aim, we realised our colour refinement based on
the Saucy~\cite{saucy12}, where the unweighted version is already
implemented as a preprocessing heuristic for automorphism group
computation. We modified the code to return the colour classes after
preprocessing and not proceed with the actual automorphism
search. From the colour classes we computed the reduced LPs according
to Lemma~\ref{lem:red}. We used CVXOPT (\url{http://cvxopt.org/}) for
solving the original and reduced linear programs. We report on the
dimensions of the linear programs and on the running times when
solving the original linear programs (without compression) as well as
the reduced ones using colour refinement. We additionally compare the
results to the symmetry reduction approach due to B\"odi et
al.~\cite{bodgruher10} described in Section~\ref{sec:symm} (which we
also implemented using Saucy). All experiments were conducted on a
standard Linux desktop machine with a 3 GHz Intel Core2-Duo processor
and 8GB RAM.
\begin{figure}[ht]
\begin{center}
\begin{tabular}{c@{\hspace{1cm}}c}
\includegraphics[width=5.5cm]{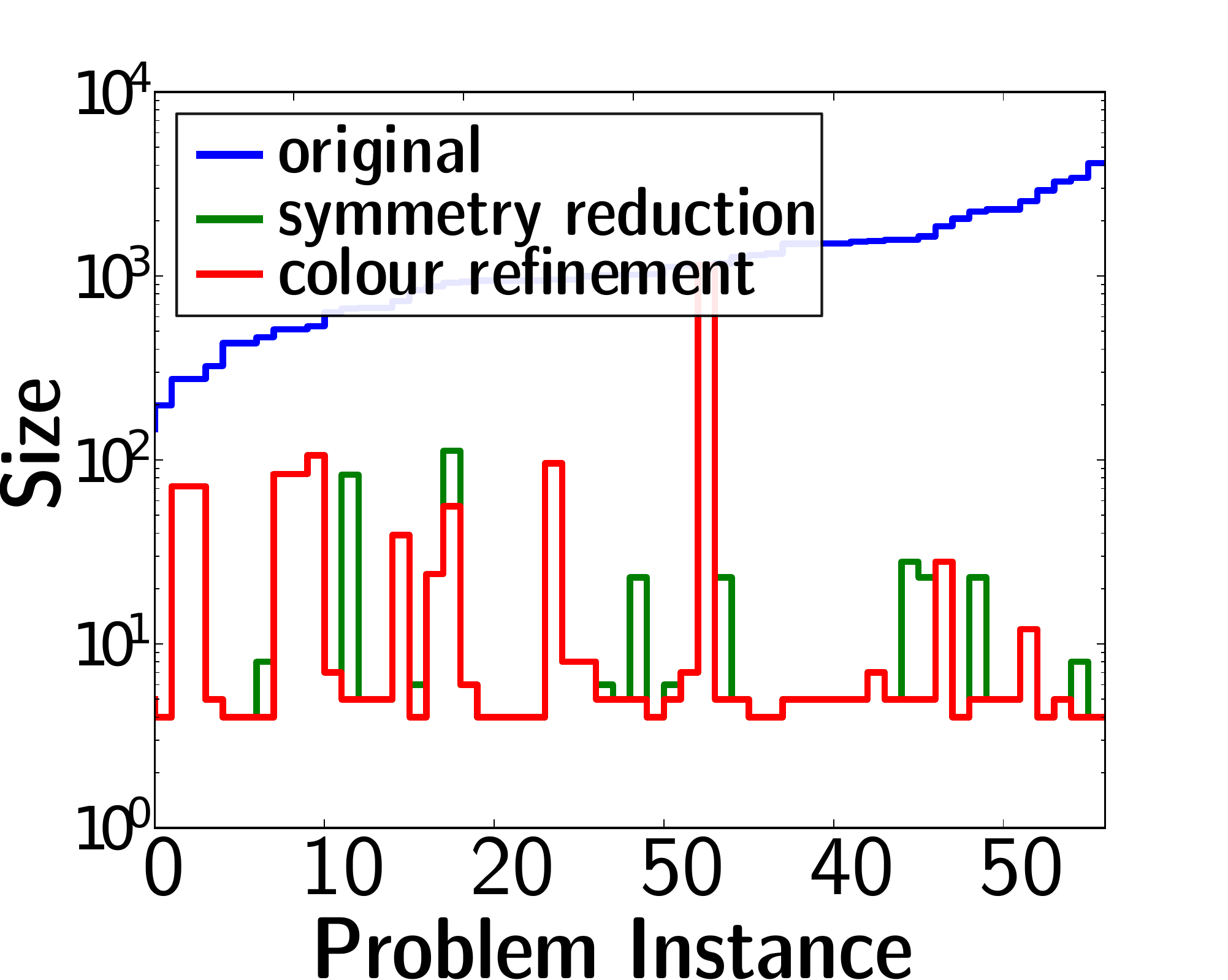}&
\includegraphics[width=5.5cm]{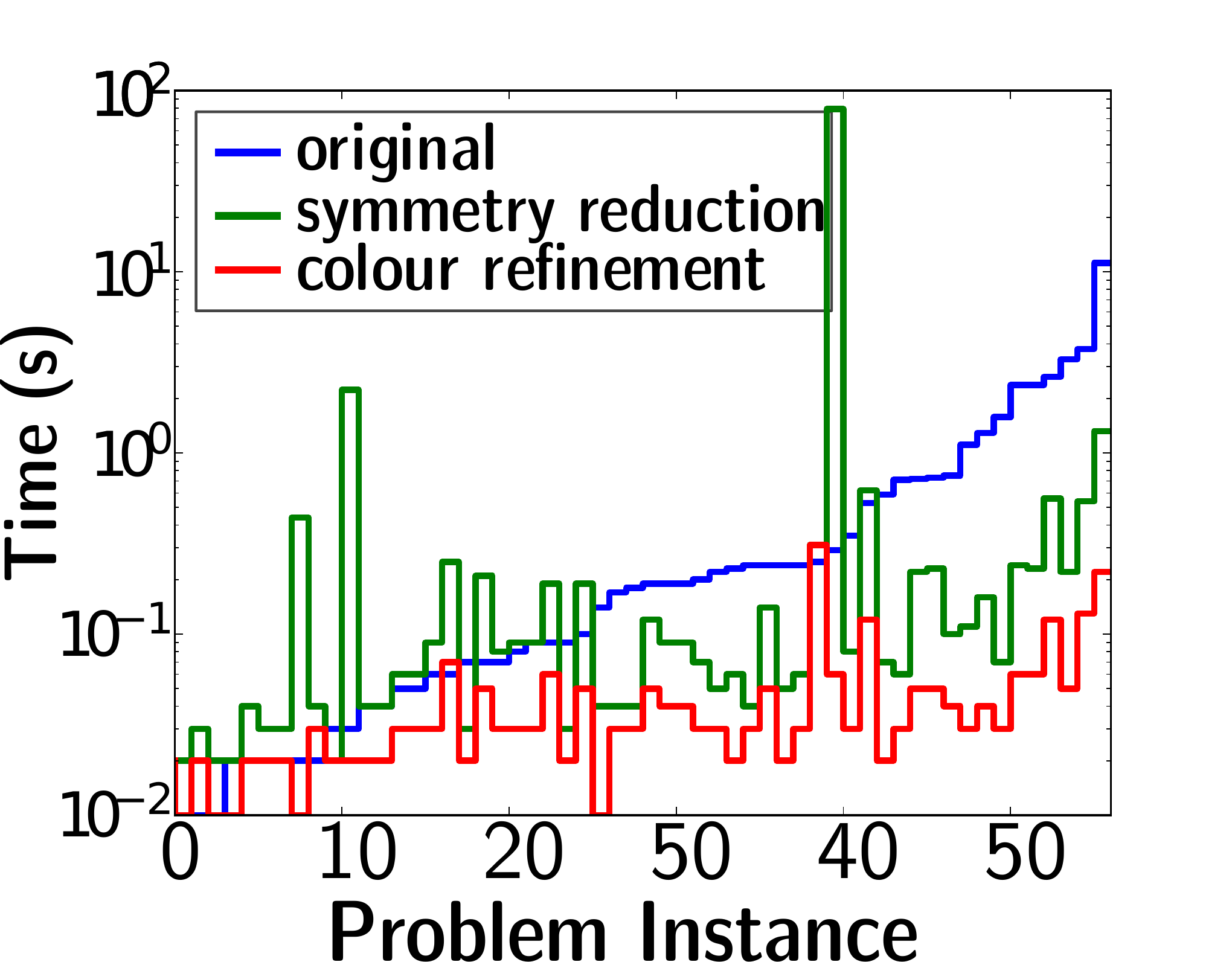}\\
(a)&(b)\\
\includegraphics[width=5.5cm]{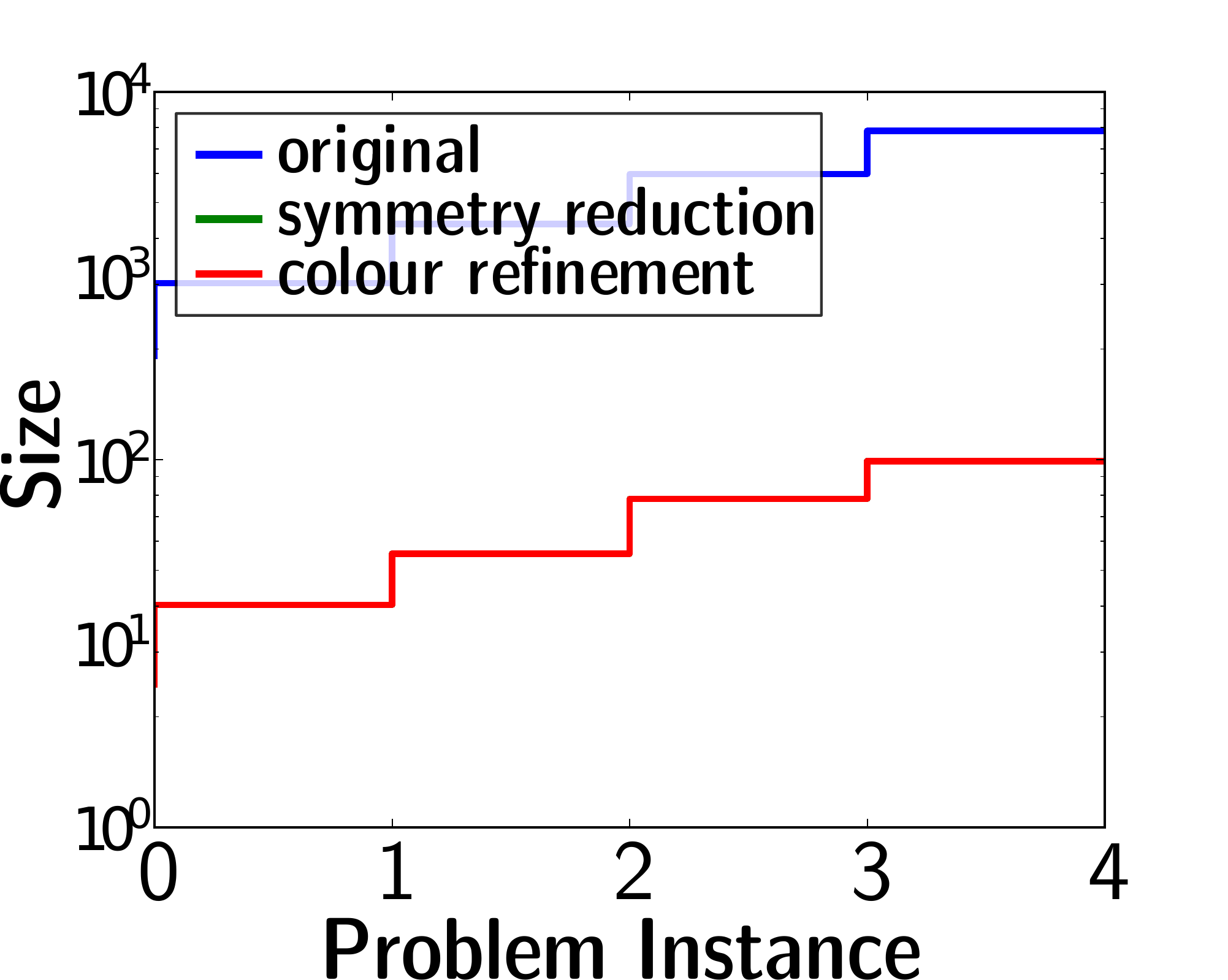}&
\includegraphics[width=5.5cm]{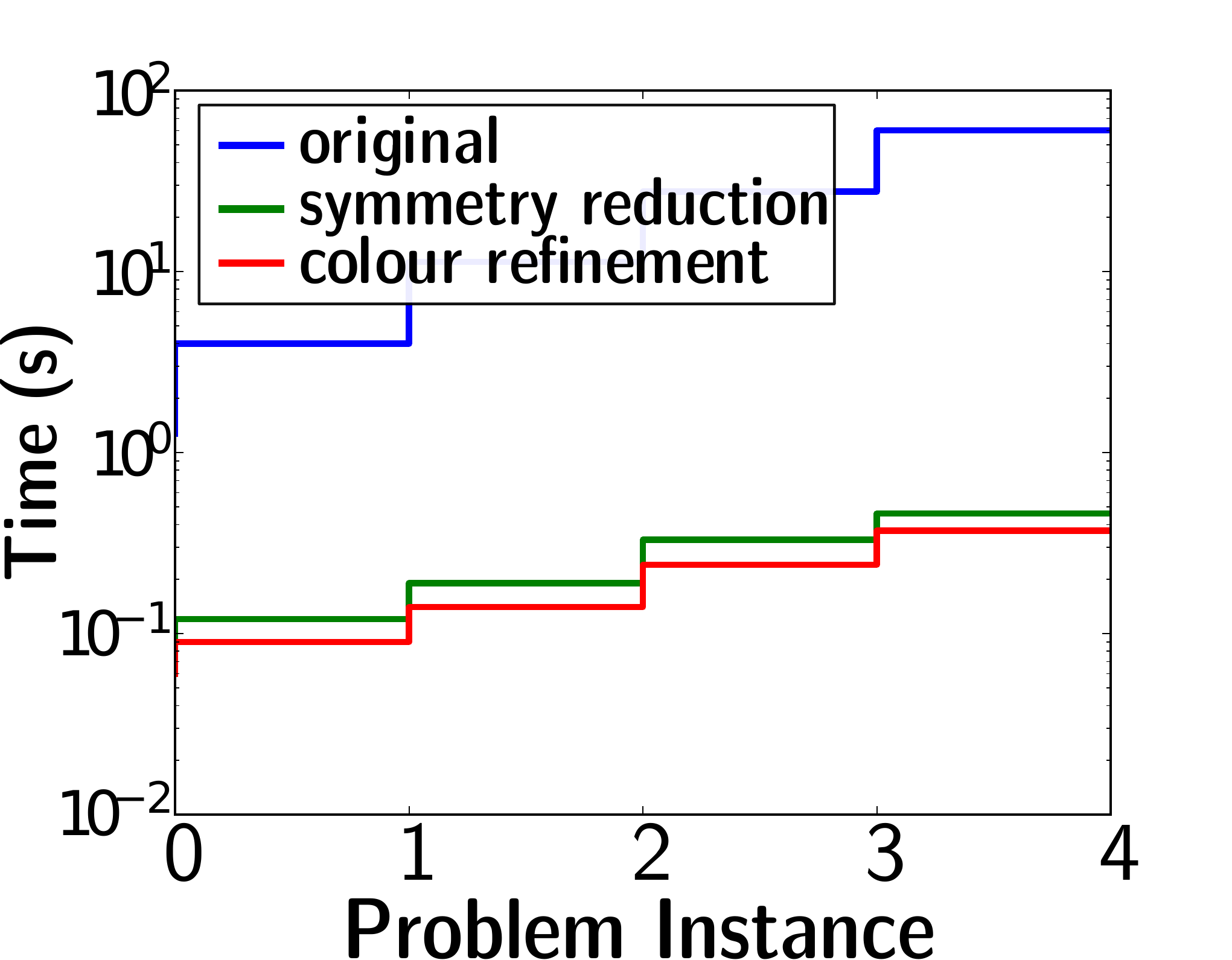}\\
(c)&(d)
\end{tabular}
\end{center}
\caption{Computational results on the different linear programs (x-axis). (a) The dimension 
(number of variables and constraints in log-scale) of the linear programs used for the 
evaluation. (b) The running times in (log-scaled) seconds (including the time for reduction) for solving the linear programs. Note that for clarity, the values in (a,b) are sorted according to the baseline independently for each figure. We refer the reader to the table in the appendix for the exact numbers.
(c,d) Same figures for computing the value functions of the grid Markov decision processes.\label{fig:exp}} 
\end{figure}

The linear programs chosen for the evaluation are relaxed versions of the integer programs 
available at Francois Margot's website \url{http://wpweb2.tepper.cmu.edu/fmargot/lpsym.html}.
They encode combinatorial optimisation problems with applications in coding theory, 
statistical design and graph theory such as computing maximum cardinality 
binary error correcting codes, edge colourings, minimum dominating sets in Hamming graphs, 
and Steiner-triple systems. 

The results are summarised in Fig.~\ref{fig:exp}(a,b). One can clearly
see that colour refinement reduces the dimension of the linear
programs at least as much as the symmetry reduction, in many cases --- as
expected --- even more. Looking at the running times, this reduction
also results in faster total computations, often an order of magnitude
faster. Overall, solving all linear programs took 38 seconds without
dimension reduction. Using the symmetry to reduce the dimensions, running
all experiments actually increased to 89 seconds, whereas using colour
refinement it only took 2 seconds. Indeed, the higher running time when
using the symmetry reduction is due to few instances only but also illustrates the
benefit of running a guaranteed quasilinear method such as colour
refinement for reducing the
dimension of linear programs.

Next, we considered 
the computation of the value function of a Markov Decision Problem  modelling decision making in situations where outcomes of actions are partly random. As 
shown in e.g.~\cite{Littman95}, the LP is $\max\nolimits_{x} \ 1^{\tr}x, \ \ \text{s.t.} \ \ x_i \leq x_i^k + \gamma\sum\nolimits_{j}p_{ij}^k x_j\;,$ where $x_i$ is the value of state $i$, $c_i^k$ is the reward that the agent receives when carrying out action $k$ in $i$, and $p_{ij}^k$ is the probability of transferring from state $i$ to state $j$ by the action $k$. The MDP instance that we used is the well-known Gridworld, see e.g.~\cite{suttonbarto98}. Here, an agent navigates within a grid of $n\times n$ states. Every state has an associated reward $R(s)$. Typically there is one or several states with high rewards, considered the goals, whereas the other states have
zero or negative associated rewards. We induced symmetries by putting a goal in every corner of the grid. 
The results for different grid sizes $n$ are summarised in Fig.~\ref{fig:exp}(c,d) and confirm our previous
results. Indeed, as expected, colour refinement and automorphisms result in the same partitions but colour refinement is 
faster. 

Finally, triggered by~\cite{mladenov12aistats}, we considered MAP inference in 
Markov logic networks (MLNs)~\cite{RichardsonDomingos:06} via 
the standard LP relaxation for MAP of the induced graphical model, 
see e.g.~\cite{globersonJ07}. Specifically, we used Richardson and Domingos' smoker-friends MLN 
encoding that friends have similar smoking habits. The so-called Frucht (among $12$ people) 
and McKay (among $8$ people) graphs were used 
to encode the social network, i.e., who are friends.   
The induced LPs were of sizes $1710$ resp. $729$. 
Solving them took $0.35$ resp.~$0.05$ seconds.
Using symmetry reduction, the sizes reduced to $1590$ resp. $247$. Reducing and solving them took $0.34$ resp $0.02$
seconds. Colour refinement, however, reduced the sizes to $46$ resp $114$. Reducing and solving the corresponding LPs took $0.02$ seconds in both cases.

\section{Conclusions}
We develop a theory of fractional automorphisms and equitable
partitions of matrices and show how it can be used to reduce the
dimension of linear programs. 
The main point is that there is no need
to compute full symmetries (that is, automorphisms) to do a symmetry
reduction for linear programs, an equitable partition will do, and
that colour refinement can compute the coarsest equitable partition
very efficiently. We demonstrate experimentally that the
gain of our method can be significant, also in comparison with other
symmetry reduction methods.

In particular, we benefit from the fact that the colour refinement
algorithm on which we rely is very efficient, running
in quasilinear time. For
really large scale applications, however, it would be desirable to implement the
algorithm in a distributed fashion. Towards this end, in
\cite{kermlagar+14} we viewed graph isomorphism as a convex
optimisation problem and showed that colour refinement can be viewed
as a variant of the Franke-Wolfe convex optimisation algorithm. We
also gave an algorithm computing the coarsest equitable partition by a
variant of the power iteration algorithm for computing eigenvalues.

Our method works well if colour refinement has few colour classes. A
key to understanding when this happens might be Atserias and Maneva's
\cite{atsman13} notion of local linear programs. In particular, for
local linear programs we may have a substantial reduction for higher
levels of the Sherali-Adams hierarchy.

Another interesting
open question is whether there exist ``approximate versions'' of colour
refinement that can be used to solve (certain) linear programs approximately and
can be implemented even more efficiently.


\newpage
\appendix

\section{Experimental Results}

The following table shows the results of our first series of
experiments with Margot's benchmark (see
\url{http://wpweb2.tepper.cmu.edu/fmargot/lpsym.html}) in some more
detail. The filenames refer to Margot's benchmark. We run three
different solvers: Columns marked ``N'' refer to the original LP
without any reduction. Columns marked ``Sr'' refer to the LP reduced
by symmetry reduction, and columns marked ``Cr'' refer to the LP
reduced by colour refinement.  We list the total time for solving the
the LPs, including the time for the reduction, the number of
variables, and the number of constraints.

\begin{small}
\begin{longtable}{l| c c c | c c c | c c c }
 & \multicolumn{3}{|c|}{Solution time} & \multicolumn{3}{|c|}{Variables} & \multicolumn{3}{|c}{Constraints} \\
Filename & N & Sr & Cr & N & Sr & Cr & N& Sr & Cr  \\
\hline
O4\_35.lp & 0.23 & 0.03 & 0.02 & 280 & 1 & 1 & 840 & 5 & 4 \\
bibd1152.lp & 0.71 & 0.21 & 0.04 & 462 & 1 & 1 & 1034 & 4 & 4 \\
bibd1154.lp & 0.72 & 0.22 & 0.04 & 462 & 1 & 1 & 1034 & 4 & 4 \\
bibd1331.lp & 0.22 & 0.05 & 0.01 & 286 & 1 & 1 & 728 & 4 & 4 \\
bibd1341.lp & 2.36 & 0.22 & 0.05 & 715 & 1 & 1 & 1586 & 4 & 4 \\
bibd1342.lp & 2.36 & 0.23 & 0.05 & 715 & 1 & 1 & 1586 & 4 & 4 \\
bibd1531.lp & 0.74 & 0.09 & 0.03 & 455 & 1 & 1 & 1120 & 4 & 4 \\
bibd738.lp & 0.0 & 0.0 & 0.01 & 35 & 1 & 1 & 112 & 4 & 4 \\
bibd933.lp & 0.01 & 0.01 & 0.0 & 84 & 1 & 1 & 240 & 4 & 4 \\
ca36243.lp & 0.01 & 0.02 & 0.01 & 64 & 1 & 1 & 368 & 3 & 3 \\
ca57245.lp & 0.05 & 0.08 & 0.02 & 128 & 1 & 1 & 816 & 3 & 3 \\
ca77247.lp & 0.06 & 0.07 & 0.02 & 128 & 1 & 1 & 816 & 3 & 3 \\
clique9.lp & 0.21 & 0.04 & 0.02 & 288 & 1 & 1 & 720 & 5 & 4 \\
cod105.lp & 11.18 & 1.31 & 0.21 & 1024 & 1 & 1 & 3072 & 3 & 3 \\
cod105r.lp & 2.62 & 0.55 & 0.11 & 638 & 3 & 3 & 1914 & 9 & 9 \\
cod83.lp & 0.19 & 0.06 & 0.02 & 256 & 1 & 1 & 768 & 3 & 3 \\
cod83r.lp & 0.16 & 0.03 & 0.02 & 219 & 6 & 6 & 657 & 18 & 18 \\
cod93.lp & 1.28 & 0.15 & 0.03 & 512 & 1 & 1 & 1536 & 3 & 3 \\
cod93r.lp & 1.1 & 0.1 & 0.02 & 466 & 7 & 7 & 1398 & 21 & 21 \\
codbt06.lp & 3.28 & 0.21 & 0.04 & 729 & 1 & 1 & 2187 & 3 & 3 \\
codbt24.lp & 0.34 & 0.07 & 0.02 & 324 & 1 & 1 & 972 & 3 & 3 \\
cov1053.lp & 0.18 & 0.11 & 0.04 & 252 & 1 & 1 & 679 & 5 & 5 \\
cov1054.lp & 0.23 & 0.13 & 0.04 & 252 & 1 & 1 & 889 & 6 & 6 \\
cov1054sb.lp & 0.24 & 0.3 & 0.3 & 252 & 252 & 252 & 898 & 898 & 898 \\
cov1075.lp & 0.05 & 0.24 & 0.06 & 120 & 1 & 1 & 877 & 7 & 7 \\
cov1076.lp & 0.06 & 0.2 & 0.04 & 120 & 1 & 1 & 835 & 7 & 7 \\
cov1174.lp & 0.52 & 0.61 & 0.11 & 330 & 1 & 1 & 1221 & 6 & 6 \\
cov954.lp & 0.04 & 0.05 & 0.02 & 126 & 1 & 1 & 507 & 6 & 6 \\
flosn52.lp & 0.17 & 0.03 & 0.02 & 234 & 4 & 1 & 780 & 19 & 4 \\
flosn60.lp & 0.23 & 0.04 & 0.01 & 270 & 4 & 1 & 900 & 19 & 4 \\
flosn84.lp & 0.58 & 0.06 & 0.01 & 378 & 4 & 1 & 1260 & 19 & 4 \\
jgt18.lp & 0.02 & 0.01 & 0.01 & 132 & 19 & 19 & 402 & 87 & 87 \\
jgt30.lp & 0.13 & 0.03 & 0.0 & 228 & 20 & 10 & 690 & 92 & 46 \\
mered.lp & 1.57 & 0.06 & 0.02 & 560 & 4 & 1 & 1680 & 19 & 4 \\
oa25332.lp & 0.18 & 0.08 & 0.03 & 243 & 1 & 1 & 1026 & 4 & 4 \\
oa25342.lp & 0.23 & 0.05 & 0.02 & 243 & 1 & 1 & 1296 & 4 & 4 \\
oa26332.lp & 3.74 & 0.53 & 0.12 & 729 & 1 & 1 & 2538 & 4 & 4 \\
oa36243.lp & 0.01 & 0.03 & 0.02 & 64 & 1 & 1 & 608 & 4 & 4 \\
oa56243.lp & 0.01 & 0.03 & 0.01 & 64 & 1 & 1 & 608 & 4 & 4 \\
oa57245.lp & 0.09 & 0.18 & 0.04 & 128 & 1 & 1 & 1376 & 4 & 4 \\
oa66234.lp & 0.01 & 0.02 & 0.01 & 64 & 16 & 16 & 212 & 56 & 56 \\
oa67233.lp & 0.03 & 0.03 & 0.01 & 128 & 20 & 20 & 384 & 64 & 64 \\
oa68233.lp & 0.18 & 0.08 & 0.03 & 256 & 24 & 24 & 698 & 72 & 72 \\
oa76234.lp & 0.0 & 0.01 & 0.0 & 64 & 16 & 16 & 212 & 56 & 56 \\
oa77233.lp & 0.03 & 0.03 & 0.01 & 128 & 20 & 20 & 384 & 64 & 64 \\
oa77247.lp & 0.08 & 0.18 & 0.05 & 128 & 1 & 1 & 1376 & 4 & 4 \\
of5\_14\_7.lp & 0.06 & 0.02 & 0.01 & 175 & 15 & 1 & 490 & 68 & 4 \\
of7\_18\_9.lp & 0.7 & 0.05 & 0.02 & 441 & 5 & 1 & 1134 & 23 & 4 \\
ofsub9.lp & 0.08 & 0.02 & 0.01 & 203 & 7 & 7 & 527 & 32 & 32 \\
pa36243.lp & 0.0 & 0.02 & 0.01 & 64 & 1 & 1 & 368 & 3 & 3 \\
pa57245.lp & 0.07 & 0.08 & 0.02 & 128 & 1 & 1 & 816 & 3 & 3 \\
pa77247.lp & 0.08 & 0.08 & 0.02 & 128 & 1 & 1 & 816 & 3 & 3 \\
sts135.lp & 0.28 & 79.18 & 0.05 & 135 & 1 & 1 & 3285 & 7 & 3 \\
sts27.lp & 0.0 & 0.01 & 0.0 & 27 & 1 & 1 & 171 & 3 & 3 \\
sts45.lp & 0.01 & 0.43 & 0.0 & 45 & 1 & 1 & 420 & 7 & 3 \\
sts63.lp & 0.02 & 2.22 & 0.01 & 63 & 1 & 1 & 777 & 5 & 3 \\
sts81.lp & 0.04 & 0.05 & 0.02 & 81 & 1 & 1 & 1242 & 3 & 3 \\
\end{longtable}
\end{small}

\end{document}